\theoremstyle{plain}
\newtheorem{lemma}{Lemma}
\newtheorem{definition}{Definition}
\newtheorem{theorem}{Theorem}
\newtheorem{question}{Question}
\newtheoremstyle{derp}
{3pt}
{3pt}
{}
{}
{\upshape}
{:}
{.5em}
{}
\theoremstyle{derp}
\newtheorem{example}{Example}
\newcommand{\Z}{\mathbb{Z}}
\newcommand{\N}{\mathbb{N}}
\newcommand\xqed[1]{%
  \leavevmode\unskip\penalty9999 \hbox{}\nobreak\hfill
  \quad\hbox{#1}}
\newcommand\qee{\xqed{$\fullmoon$}}
\newcommand{\dd}[2]{\begin{matrix} #1 \\ #2 \end{matrix}}
\newcommand{\supp}{\mathrm{supp}}
\newcommand{\near}{\mathrm{N}}
\newcommand{\far}{\mathrm{F}}
\tikzset{
  negate/.style={
    decoration={
      markings,
      mark= at position 0.5 with {
        \node[transform shape, yscale=0.5, xscale=2] (tempnode) {$/$};
      },
    },
    postaction={decorate},
  },
}
\newcommand{\gateE}{
\begin{tikzpicture}[scale=0.25]
\draw[black!40!white] (0,0) rectangle (1,1);
\end{tikzpicture}}
\newcommand{\gateH}{
\begin{tikzpicture}[scale=0.25]
\draw[black!40!white] (0,0) rectangle (1,1);
\draw (0,0.5) -- (1,0.5);
\end{tikzpicture}
}
\newcommand{\gateV}{
\begin{tikzpicture}[scale=0.25]
\draw[black!40!white] (0,0) rectangle (1,1);
\draw (0.5,0) -- (0.5,1);
\end{tikzpicture}}
\newcommand{\gateEN}{
\begin{tikzpicture}[scale=0.25]
\draw[black!40!white] (0,0) rectangle (1,1);
\draw (1,0.5) -- (0.5,0.5) -- (0.5,1);
\end{tikzpicture}}
\newcommand{\gateNW}{
\begin{tikzpicture}[scale=0.25]
\draw[black!40!white] (0,0) rectangle (1,1);
\draw (0,0.5) -- (0.5,0.5) -- (0.5,1);
\end{tikzpicture}}
\newcommand{\gateWS}{
\begin{tikzpicture}[scale=0.25]
\draw[black!40!white] (0,0) rectangle (1,1);
\draw (0,0.5) -- (0.5,0.5) -- (0.5,0);
\end{tikzpicture}}
\newcommand{\gateES}{
\begin{tikzpicture}[scale=0.25]
\draw[black!40!white] (0,0) rectangle (1,1);
\draw (1,0.5) -- (0.5,0.5) -- (0.5,0);
\end{tikzpicture}}
\newcommand{\gateT}{
\begin{tikzpicture}[scale=0.25]
\draw[black!40!white] (0,0) rectangle (1,1);
\draw (0,0.5) -- (0.5,0.5);
\draw (0.5,0.5) -- (0.7,0.7);
\draw (0.5,0.5) -- (0.7,0.3);
\end{tikzpicture}}
\newcommand{\gateN}{
\begin{tikzpicture}[scale=0.25]
\draw[black!40!white] (0,0) rectangle (1,1);
\draw (0,0.5) -- (1,0.5);
\filldraw[black] (0.5,0.5) circle (0.15);
\end{tikzpicture}}
\newcommand{\gateS}{
\begin{tikzpicture}[scale=0.25]
\draw[black!40!white] (0,0) rectangle (1,1);
\draw (0.5,0.5) -- (1,0.5);
\draw (0.5,0) -- (0.5,1);
\end{tikzpicture}}
\newcommand{\gateX}{
\begin{tikzpicture}[scale=0.25]
\draw[black!40!white] (0,0) rectangle (1,1);
\draw (0,0.5) -- (1,0.5) (0.5,0) -- (0.5,1);
\end{tikzpicture}}
\newcommand{\gateO}{
\begin{tikzpicture}[scale=0.25]
\draw[black!40!white] (0,0) rectangle (1,1);
\draw (0.5,1) -- (0.5,0.75);
\draw (0.5,0) -- (0.5,0.25);
\draw (0.8,0.5) -- (1,0.5);
\draw (0.25,0.2) edge[bend right=80,in=-90, out=-90, looseness = 1] (0.25,0.8);
\draw (0.25,0.2) edge[bend right=80,in=-90, out=-90, looseness = 3] (0.25,0.8);
\end{tikzpicture}}
\tikzset{
    zero color/.initial=white,
    zero color/.get=\zerocol,
    zero color/.store in=\zerocol,
    one color/.initial=red,
    one color/.get=\onecol,
    one color/.store in=\onecol,
    two color/.initial=blue,
    two color/.get=\twocol,
    two color/.store in=\twocol,
    cell wd/.initial=1,
    cell wd/.get=\cellwd,
    cell wd/.store in=\cellwd,
    cell ht/.initial=1,
    cell ht/.get=\cellht,
    cell ht/.store in=\cellht,
}
\newcommand{\drawgrid}[2][]{
\medskip
\begin{tikzpicture}[#1]
  \pgfplotstablegetrowsof{#2} 
  \pgfmathtruncatemacro{\totrow}{\pgfplotsretval}
  \pgfplotstablegetcolsof{#2} 
  \pgfmathtruncatemacro{\totcol}{\pgfplotsretval}
  
  \pgfplotstableforeachcolumn#2\as\col{
    \pgfplotstableforeachcolumnelement{\col}\of#2\as\colcnt{%
      \ifnum\colcnt=0
        \fill[\zerocol]($ -\pgfplotstablerow*(0,\cellht) + \col*(\cellwd,0) $) rectangle+(\cellwd,\cellht);
      \fi
      \ifnum\colcnt=1
        \fill[\onecol]($ -\pgfplotstablerow*(0,\cellht) + \col*(\cellwd,0) $) rectangle+(\cellwd,\cellht);
      \fi
      \ifnum\colcnt=2
        \fill[\twocol]($ -\pgfplotstablerow*(0,\cellht) + \col*(\cellwd,0) $) rectangle+(\cellwd,\cellht);
      \fi
    }
  }
\end{tikzpicture}
\medskip
}
\newcommand\kek{
      \pgfplotstablegetrowsof{\matrixfile} 
      \pgfmathtruncatemacro{\totrow}{\pgfplotsretval}
      \pgfplotstablegetcolsof{\matrixfile} 
      \pgfmathtruncatemacro{\totcol}{\pgfplotsretval}
      
      \pgfplotstableforeachcolumn\matrixfile\as\col{
        \pgfplotstableforeachcolumnelement{\col}\of\matrixfile\as\colcnt{%
          \ifnum\colcnt=0
          \fill[white]($ -\pgfplotstablerow*(0,\cellht) + \col*(\cellwd,0) $) rectangle +(\cellwd,\cellht);
          \fi
          \ifnum\colcnt=1
          \fill[black]($ -\pgfplotstablerow*(0,\cellht) + \col*(\cellwd,0) $) rectangle+(\cellwd,\cellht);
          \fi
          \ifnum\colcnt=2
          \fill[black]($ -\pgfplotstablerow*(0,\cellht) + \col*(\cellwd,0) $) rectangle+(\cellwd,\cellht);
          \fill[white]($ -\pgfplotstablerow*(0,\cellht) + \col*(\cellwd,0) + (\cellwd/2,\cellht/2) $) circle (\cellwd/4);
          \fi
          \ifnum\colcnt=3
          \fill[black]($ -\pgfplotstablerow*(0,\cellht) + \col*(\cellwd,0) $) rectangle+(\cellwd,\cellht);
          \draw[white,thick]($ -\pgfplotstablerow*(0,\cellht) + \col*(\cellwd,0) + (\cellwd/4,\cellht/4) $) -- +(\cellwd/2,\cellht/2);
          \draw[white,thick]($ -\pgfplotstablerow*(0,\cellht) + \col*(\cellwd,0) + (\cellwd/4,3*\cellht/4) $) -- +(\cellwd/2,-\cellht/2);
          \fi
        }
      }}
\newcommand\keke{
      \pgfplotstablegetrowsof{\matrixfile} 
      \pgfmathtruncatemacro{\totrow}{\pgfplotsretval}
      \pgfplotstablegetcolsof{\matrixfile} 
      \pgfmathtruncatemacro{\totcol}{\pgfplotsretval}
      
      \pgfplotstableforeachcolumn\matrixfile\as\col{
        \pgfplotstableforeachcolumnelement{\col}\of\matrixfile\as\colcnt{%
          \ifnum\colcnt=0
          \fill[white]($ -\pgfplotstablerow*(0,\cellht) + \col*(\cellwd,0) $) rectangle +(\cellwd,\cellht);
          \fi
          \ifnum\colcnt=1
          \fill[black]($ -\pgfplotstablerow*(0,\cellht) + \col*(\cellwd,0) $) rectangle+(\cellwd,\cellht);
          \fi
          \ifnum\colcnt=2
          \fill[black]($ -\pgfplotstablerow*(0,\cellht) + \col*(\cellwd,0) $) rectangle+(\cellwd,\cellht);
          \fi
          \ifnum\colcnt=3
          \fill[black]($ -\pgfplotstablerow*(0,\cellht) + \col*(\cellwd,0) $) rectangle+(\cellwd,\cellht);
          \fi
        }
      }}
\journal{Theoretical Computer Science}
\begin{document}
\begin{frontmatter}

\title{Structure and computability of preimages in the Game of Life}

\author[1]{Ville Salo}
\ead{vosalo@utu.fi}
\author[1]{Ilkka Törmä\corref{cor}\fnref{fn}}
\ead{iatorm@utu.fi}

\cortext[cor]{Corresponding author}
\fntext[fn]{Ilkka Törmä was supported by the Academy of Finland under grant 346566.}

  
\address[1]{Department of Mathematics and Statistics, University of Turku, 20014 Turku, Finland}

\begin{abstract}
  Conway's Game of Life is a two-dimensional cellular automaton. As a dynamical system, it is well-known to be computationally universal, i.e.\ capable of simulating an arbitrary Turing machine. We show that in a sense taking a single backwards step of the Game of Life is a computationally universal process, by constructing patterns whose preimage computation encodes an arbitrary circuit-satisfaction problem, or, equivalently, any tiling problem. As a corollary, we obtain for example that the set of orphans is coNP-complete, exhibit a $6210 \times 37800$-periodic configuration whose preimage is nonempty but contains no periodic configurations, and prove that the existence of a preimage for a periodic point is undecidable. Our constructions were obtained by a combination of computer searches and manual design.
\end{abstract}

\begin{keyword}
Conway's Game of Life \sep cellular automata \sep orphans \sep Gardens of Eden \sep computational universality \sep multidimensional symbolic dynamics
\MSC[2020] 68Q80 \sep 37B51
\end{keyword}

\end{frontmatter}

\section{Introduction}
\label{sec:Intro}

Conway's Game of Life, designed by John Conway and popularized by Martin Gardner \cite{Ga70}, is a two-dimensional cellular automaton. Specifically, it is the function $g : \{0,1\}^{\Z^2} \to \{0,1\}^{\Z^2}$ defined by
\begin{equation}
\label{eq:GoLdef}
g(x)_{\vec v} = 
\begin{cases}
1, & \text{if~} x_{\vec v} = 0 \text{~and~} \sum_{\vec n \in [-1,1]^2} x_{\vec v + \vec n} = 3, \\
1, & \text{if~} x_{\vec v} = 1 \text{~and~} \sum_{\vec n \in [-1,1]^2} x_{\vec v + \vec n} \in \{3,4\}, \\
0, & \text{otherwise}
\end{cases}
\end{equation}
where $x \in \{0,1\}^{\Z^2}$ and $\vec v \in \Z^2$.
If configurations $x \in \{0,1\}^{\Z^2}$ are interpreted as infinite grids of live/black/occupied cells (denoted by $1$) and dead/white/empty cells (denoted by $0$), then the rule can be interpreted as saying that a new live cell is born at a dead cell with exactly three live neighbors, and a live cell will stay alive if and only if it has two or three live neighbors.

The Game of Life, being a function from a topological (specifically, Cantor) space to itself, can be interpreted as a dynamical system. As the space is very combinatorial, it can also naturally be interpreted as a computational device, and these two points of view are strongly intertwined. Starting from a finite-support configuration (one with finitely many live cells) it can simulate the behavior of a Turing machine on a finite but unbounded configuration \cite{Re02}.\footnote{Adam Goucher obtains a simultaneous proof of both claims by proving intrinsic universality with a quiescent state simulated by $0$-blocks \cite[Chapter 12]{JoGr22}.}
The Game of Life is also intrinsically universal, i.e.\ its subsystems can simulate any cellular automaton \cite{DuRo99}.
Thus, it is natural to say that the dynamics of the Game of Life is computationally universal. See \cite{JoGr22,lifewiki} for more information on building patterns in the Game of Life with interesting dynamics, which is a rather vast field of science on its own.

In addition to its dynamics, the Game of Life can also thought of as a \emph{block map}, i.e.\ a continuous shift-commuting map $g : X \to Y$ between two subshifts $X = \{0,1\}^{\Z^2}$ and $Y = \{0,1\}^{\Z^2}$.\footnote{In symbolic dynamics, a \emph{subshift} is a certain kind of set of infinite configurations.}
Since $g$ is a cellular automaton, these subshifts are of course identical, but we can study them as separate objects with different roles.
In this point of view, our attention moves from iteration to ``one-step'' problems. For example, we may ask about the image of the Game of Life $g(\{0,1\}^{\Z^2})$ (in symbolic dynamics jargon, this is a particular \emph{sofic shift}), about its fixed points (in symbolic dynamics jargon, this is a particular \emph{subshift of finite type}, or \emph{SFT}), or about the preimage $g^{-1}(0^{\Z^2})$ of the completely empty configuration (this is another SFT).

Previous results in this direction are that the maximal density of $1$s in the subshift of fixed points is exactly $1/2$ \cite{El99}, that $g^{-1}(0^{\Z^2})$ has dense semilinear points but does not have dense periodic points \cite{SaTo22}, and that there exist finite patterns $P$ such that each pattern in $g^{-1}(P)$ also contains an occurrence of $P$ (which has various implications for the long-term dynamics of $g$) \cite{SaTo22b}. There are also many open problems about these subshifts, for example, the Still Life Finitization Problem asks whether the subshift of fixed points has dense finite-support configurations.

One may also consider the study of temporally periodic points as being about one-step dynamics (seeing higher powers of the Game of Life as block maps). An interesting problem is \emph{omniperiodicity}, or whether the Game of Life admits finite-support configurations of all periods. It was recently solved in the positive \cite{omniper}, ending a decades-long collaborative project of the Game of Life research community. It remains open what other properties the periodic-point subshifts have; for example, like with the fixed-point subshift, it is natural to ask whether finite-support points are dense.

In this paper, we prove that the Game of Life is ``universal as a block map'' (in a rather technical sense discussed in Section~\ref{sec:Universality}). In particular, we can argue that producing a preimage of a configuration is a computationally universal process, roughly equivalent to finding a satisfying assignment to a Boolean circuit. We construct ``gadgets'', square patterns of $0$s and $1$s that represent logical gates, such that the preimages of circuits built from these gadgets are in correspondence with the satisfying assignments of the corresponding abstract circuit. For example, we show below a ``wire-crossing'' gadget, which allows two signals to pass through it without interacting, and a corresponding preimage where signals cross:


  \begin{center}
    \pgfplotstableread{crossing2preimage.cvs}{\matrixfile}
      \begin{tikzpicture}[scale = 0.1]
\keke
      \draw[black!30!white] (0,-37) grid (51,1);
    \end{tikzpicture}
    \raisebox{1.8cm}{\hspace{0.1cm}$\overset{g}{\mapsto}$\hspace{-0.05cm}}
    \pgfplotstableread{crossing2.cvs}{\matrixfile}
    \begin{tikzpicture}[scale = 0.1]
\keke
      \draw[black!30!white] (0,-35) grid (49,1);
      \draw[white] (0,-36) grid (0,-35.5); 
    \end{tikzpicture}
  \end{center}

  On the right hand side is the gadget.
  The black lines of width 2 are called \emph{wires}.
  On the left hand side is a preimage of the gadget under $g$.
  The repeated length-4 segments form one particular preimage of such a wire.
  This repeating pattern has the useful property that it propagates through a wire: if we have a preimage of a wire, one segment of which contains this pattern, then the entire preimage must consist of repetitions of the pattern with period 3.
  Note that the word ``propagate'' does not imply a temporal relation, but a constraint between different parts of a single preimage.

The 3-periodic preimage of a wire can be in one of three phases, and we choose two of them -- called its \emph{operating phases} -- to represent a value of $0$ or $1$ carried by the wire.
In the above gadget, the signals of the north and west wires are correlated on the preimage, as are those of the east and west wires, as long as all four are in an operating phase.
We present gadgets for tasks like turning wires, inverting the signal they carry, and logical operations, as well as more technical gadgets that introduce the signal into wires and change the phase of the signal traveling on them.

Our general construction allows us to turn the problem of satisfying a circuit into a preimage computation problem.
It follows that given a finite-support configuration, it is NP-complete whether this configuration admits a preimage under the Game of Life.
The problem is already known to be in NP \cite{SaTo22}, and we prove NP-hardness in Theorem~\ref{thm:FinitePreimageNPC}.
Also, given a finite pattern, it is NP-complete whether it appears in the image subshift of the Game of life, equivalently whether it is an orphan is co-NP-complete (Theorem~\ref{thm:Orphans}).
We note that for a general fixed cellular automaton, the former problem may be undecidable (by an easy reduction from the halting problem: the finite-support configuration encodes an input to a fixed universal Turing machine, and the preimage must contain an infinite computation, which the automaton erases in a single step), while the latter is always in NP.

We can also simulate tilings of the two-dimensional plane by drawing circuits that check the color constraints of a set of Wang tiles. An immediate, but perhaps striking, corollary is that there exists a periodic configuration that has a preimage, but has no periodic preimage (Theorem~\ref{thm:AperiodicOnly}). Using a small aperiodic tile set, such as the 11-tile Jeandel-Rao set \cite{JeRa21}, we can make the periods small enough ($6210 \times 37800$) to fit comfortably in computer memory. In the code repository \cite{pat-repo}, the reader can find an explicit presentation compatible with the Golly simulator \cite{golly}; screenshots are shown in Figure~\ref{fig:JeandelRao}.

\begin{figure}[htp]
\begin{center}
\fbox{\includegraphics[origin=c,scale=0.5,trim={0.23cm 0.23cm 0.23cm 0.23cm}]{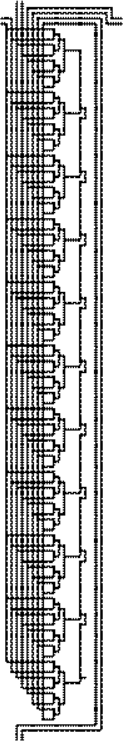}}
\hspace{4mm}
\fbox{\includegraphics[origin=c,scale=0.5,trim={0.23cm 0.23cm 0.23cm 0.23cm}]{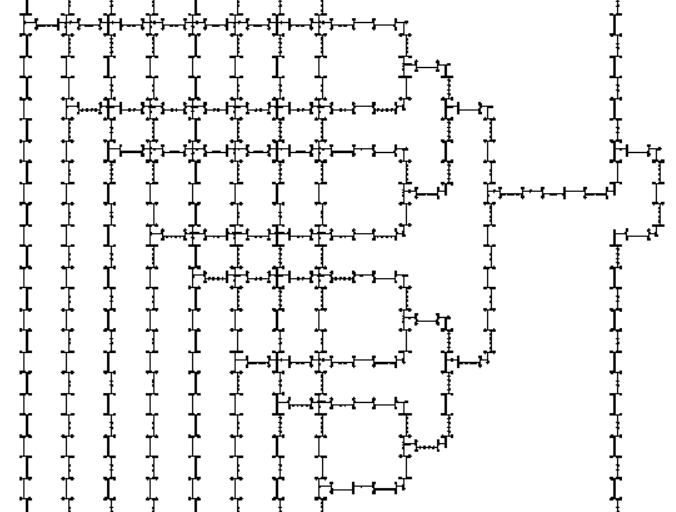}}
\end{center}
\caption{On the left, the fundamental domain (a $6210 \times 37800$ rectangle) of a periodic configuration whose preimages factor onto scaled tilings of the Jeandel-Rao tile set (thus are aperiodic). On the right, a subpattern roughly corresponding to a single tile type.}
\label{fig:JeandelRao}
\end{figure}

More strongly, we prove that there exists a periodic configuration that has a preimage, but has no recursive (= computable) preimage (Theorem~\ref{thm:ArecursiveOnly}).
We can also guarantee that every preimage of this periodic configuration has $\Omega(n)$ Kolmogorov complexity in every $n \times n$-pattern (Theorem~\ref{thm:Kolmogorov}).
On the decidability side, we show that given a periodic configuration, it is undecidable whether it admits a preimage (Theorem~\ref{thm:PeriodicPreimageUndecidable}), and given a periodic configuration that has a preimage, it is undecidable whether it has a periodic preimage (Theorem~\ref{thm:PreimagePeriodicUndecidable}).
We emphasize that no real work is needed to establish these corollaries -- we simply transport well-known results from tiling theory into the Game of Life using our circuit technology.

Most single-step properties of two-dimensional cellular automata, such as injectivity or reversibility, are undecidable, as are most dynamical properties of CA in any number of dimensions \cite{Ka18,Su18}.
By contrast, for one-dimensional CA single-step problems are much easier than for higher-dimensional CA, as they can be modeled by finite state machines \cite{Su15}.
In particular, the orphans of a one-dimensional CA form a regular language that can be explicitly computed from the local rule \cite{Wo84}.
Even the dynamical properties of one- and higher-dimensional CA tend to be qualitatively different, even though both are undecidable \cite{SaTh11}.

\medskip

The technical framework where we prove most of our results is adapted from the idea of intrinsic universality in cellular automata theory \cite{iu,DeMaOlTh11,DeMaOlTh11b}. The basic idea is that a universal block map $\phi$ can simulate any other block map $\psi$, when symbols of (the codomain of) $\psi$ are replaced by suitable rectangular patterns (in the codomain) of $\phi$. We in fact introduce three notions of universality for block maps, namely weak, semiweak and strong universality. We develop a basic theory of these notions, and show that a certain block map taking satisfied logical circuits to their underlying circuits (all encoded as subshifts of finite type) is strongly universal.

We prove that the Game of Life semiweakly simulates the circuit system, and conclude that it is semiweakly universal (Theorem~\ref{thm:GoLSemiweaklyUniversal}).
This suffices to prove all the results above.
Indeed, all but Theorem~\ref{thm:PreimagePeriodicUndecidable} follow from weak universality.

\medskip

All of our gadgets were found using SAT-solvers to find preimages (and prove the lack of particular types of preimages) for patterns, and then looking for suitable patterns using various types of search methods, in some cases accompanied by a bit of manual experimentation.
Our paper is by no means the first time ``NP-completeness gadgets'' are found by computer \cite{ErRu13,RuHo10}.\footnote{See references in \cite{NPcomp} for more examples.}

The results are conditional to SAT-solvers' claims about the gadgets being correct. We have not been able to verify any but the most obvious claims with straightforward search algorithms. Information on the implementation and testing can be found in Section~\ref{sec:Implementation}.




\section{Definitions}

Let $A$ be a finite set, called a \emph{state set} or \emph{alphabet}.
A two-dimensional \emph{pattern} over $A$ consists of a set $D \subseteq \Z^2$ and a mapping $P : D \to A$.
We usually refer to $P$ as the pattern and $D = D(P)$ as the \emph{domain} of $P$.
The symbol at position $\vec v \in D$ in $P$ is denoted by $P_{\vec v}$.
A \emph{finite} pattern is a pattern with a finite domain, and a \emph{configuration} is a pattern whose domain is $\Z^2$.
Configurations are denoted by small letters $x, y, z, \ldots$, and the set $A^{\Z^2}$ of all configurations is the two-dimensional \emph{full shift} over $A$.
If $A = \{0,1\}$ is the binary alphabet, the \emph{support} of a configuration $x$ is the set $\supp(x) = \{ \vec v \in \Z^2 \mid x_{\vec v} = 1 \}$. We also use the obvious analog for finite patterns.
The \emph{shift maps} $\sigma^{\vec v} : A^{\Z^2} \to A^{\Z^2}$ for $\vec v \in \Z^2$ are defined by $\sigma^{\vec v}(x)_{\vec w} = x_{\vec w + \vec v}$.
They can also be applied to patterns: $\sigma^{\vec v}(P)$ is the pattern $Q \in A^{D(P) - \vec v}$ with $Q_{\vec w} = P_{\vec w + \vec v}$ for all $\vec w \in D(P) - \vec{v}$.
A pattern $P$ \emph{occurs} in a configuration $x$, denoted by $P \sqsubset x$, if there exists $\vec v \in \Z^2$ with $\sigma^{\vec v}(x)|_{D(P)} = P$.

Every set $F$ of finite patterns over $A$ defines a \emph{subshift} $X \subset A^{\Z^2}$ as the set of configurations where no element of $F$ occurs.
Symbolically,
\[
  X = \{ x \in A^{\Z^2} \mid \forall P \in F : P \not\sqsubset x \}.
\]
If $F$ is finite, then $X$ is a \emph{shift of finite type} (SFT).

One way to define two-dimensional SFTs is by \emph{Wang tile sets}.
A \emph{Wang tile} over a finite set $C$ of \emph{edge colors} is a 4-tuple $t = (t_E, t_N, t_W, t_S) \in C^4$.
A set $T \subseteq C^4$ of Wang tiles defines an SFT $X_T \subseteq T^{\Z^2}$ using the forbidden patterns
\[
  t^1 t^2 \quad \text{and} \quad \begin{array}{c} t^3 \\ t^4 \end{array}
\]
for all $t^1, t^2, t^3, t^4 \in T$ with $(t^1)_E \neq (t^2)_W$ and $(t^3)_S \neq (t^4)_N$, respectively.
A Wang tile can be visualized as a unit square whose east, north, west and south edges are labeled with the four colors in this order.
A configuration of Wang tiles is in $X_T$ if the edge colors of all adjacent tiles match.

A \emph{cellular automaton} is a function $f : A^{\Z^2} \to A^{\Z^2}$ that admits a finite \emph{neighborhood} $N \subset \Z^2$ and a \emph{local rule} $F : A^N \to A$ such that $f(x)_{\vec v} = F(\sigma^{\vec v}(x)|_N)$ holds for all $x \in A^{\Z^2}$ and $\vec v \in \Z^2$. 
Once the neighborhood $N$ and local rule $F$ of a cellular automaton $f$ have been fixed, we define the preimage of a pattern $P$ over $A$ as
\[
  f^{-1}(P) = \{ Q \in A^{D(P) + N} \mid Q \sqsubset X, \forall \vec v \in D(P) : F(\sigma^{\vec v}(Q)|_N) = P_{\vec v} \}.
\]
Note that the common domain of the patterns in $f^{-1}(P)$ is usually larger than that of $P$.
We sometimes refer to individual patterns in $f^{-1}(P)$ as preimages of $P$.

The Game of Life $g : \{0,1\}^{\Z^2} \to \{0,1\}^{\Z^2}$ is the cellular automaton defined in Equation~\eqref{eq:GoLdef}, and it admits the neighborhood $N = [-1,1]^2$.
We will not use the letter $g$ for any other purpose.

Let $A$ and $B$ be alphabets and $R = [0, w-1] \times [0, h-1]$ a rectangle.
A \emph{substitution} of shape $R$ is a mapping $\tau : A \to B^R$.
It can be extended to a mapping $\tau : A^{\Z^2} \to B^{\Z^2}$ by $\tau(x)_{\vec v} = \tau(\sigma^{\vec w}(x)_{\vec 0})_{\vec u}$, where $\vec u \in R$ and $\vec w \in w \Z \times h \Z$ are the unique vectors that satisfy $\vec u + \vec w = \vec v$.
Intuitively, we apply the ``local'' map $\tau$ to every symbol of $x$, obtain a configuration of $R$-shaped patterns over $B$, and then stitch them together to obtain a configuration over $B$.

For our computational complexity results we need to fix encodings of various objects into binary words.
Each alphabet $A$ is assumed to come with a binary encoding $b : A \to \{0,1\}^+$ that is unambiguous and injective as a morphism, in the sense that the mapping from words $a_1 \cdots a_n \in A^*$ of arbitrarily length to concatenations $b(a_1) \cdots b(a_n)$ is injective.
For rectangles of the form $R = [-m, m] \times [-n, n]$, we encode patterns $P \in A^R$ as $0^{m} 1 0^{n} 1 u$, where $u$ is the concatenation of $b(P_{\vec v})$ for $\vec v \in R$ in lexicographical order.
A finite-support configuration $x \in \{0,1\}^{\Z^2}$ is encoded identically to the pattern $x|_R$, where $R \subset \Z^2$ is the smallest rectangle of the above form containing $\supp(x)$.
For a fixed universal Turing machine $U$, the \emph{Kolmogorov complexity} of a word $w \in \{0,1\}^*$ is the length of the shortest word $v \in \{0,1\}^*$ such that $U$ computes $w$ when given $v$ as input.
The Kolmogorov complexity of a rectangular pattern $P$ is defined as the Kolmogorov complexity of the concatenation of $b(P_{\vec v})$ for $\vec v \in D(P)$ in lexicographical order.
Note that the shape and position of the pattern are not encoded.

The following definitions are needed only for Section~\ref{sec:Universality} and the proofs of Section~\ref{sec:Corollaries}, which can be skipped on first reading.
Let $X \subseteq A^{\Z^2}$ and $Y \subseteq B^{\Z^2}$ be subshifts.
A \emph{block map} is a function $f : X \to Y$ that admits a finite neighborhood $N \subset \Z^2$ and a local rule $F : A^N \to B$ such that $f(x)_{\vec v} = F(\sigma^{\vec v}(x)|_N)$ holds for all $x \in X$ and $\vec v \in \Z^2$.
The image $f(X)$ of a block map is always a subshift.
Note that a cellular automaton is just a block map from a full shift to itself.
A \emph{section} of a block map $f : X \to Y$ is a block map $h : Y \to X$ such that $f(h(y)) = y$ for all $y \in Y$.
We say a block map \emph{splits} or \emph{is split} if it admits a section.
If a block map $f : X \to Y$ is bijective, it is called a \emph{(topological) conjugacy}, and then $X$ and $Y$ are \emph{conjugate} subshifts.
The inverse function of a conjugacy is always a block map.


\section{Wires}

In our constructions we use wires, which are a type of \emph{self-propagating pattern}, namely a pattern $P$ associated to a vector $\vec p \neq \vec 0$ with the property that whenever the $g$-preimage $x$ of a suitable configuration $y$ contains an occurrence of $P$ at some coordinate $\vec v$, then it contains another occurrence at $\vec v + \vec p$. In our case, $\vec p \in \{(0, \pm 3), (\pm 3,0)\}$, and $y$ is suitable if it resembles the all-$1$ configuration near the two occurrences of $P$. If $y$ is suitable near $\vec v + n \vec p$ for $n = 2, 3, \ldots$, then the pattern must occur in $x$ at these coordinates as well.
Again we note that the notion of propagation is not temporal, since we are only dealing with the set of preimages of a single configuration or pattern, but it is directional in the sense that an occurrence of a pattern at some position in a preimage implies another occurrence at a certain nearby position.



A \emph{vertical wire} is a pattern of shape $2 \times n$ consisting of $1$-symbols, and a \emph{horizontal wire} is a pattern of shape $n \times 2$ consisting of $1$-symbols. Information is transmitted along a wire in the form of a $3$-periodic pattern.
The \emph{vertical wire signals} are the three $4 \times 2$ patterns
\begin{align*}
  W_0 = {} & \dd{1&1&1&1}{0&0&0&0} & W_1 = {} & \dd{0&0&0&0}{1&1&1&1} & W_2 = {} & \dd{0&0&0&0}{0&0&0&0}
\end{align*}
Rotating them by 90 degrees counterclockwise results in the \emph{horizontal wire signals}.

Consider a configuration $x \in \{0, 1\}^{\Z^2}$ and $y = g(x)$.
If $x|_{[0,3] \times [0,1]} = W_i$ and $y|_{[1,2]^2} = \begin{smallmatrix} 1&1 \\ 1&1 \end{smallmatrix}$, then a short case analysis reveals that $x|_{[0,3] \times [1,2]} = W_{i+1}$, where the index is taken modulo 3.
For example, in the case $i = 0$ both $x_{(1,1)}$ and $x_{(2,1)}$ must have three $1$-symbols in their neighborhood, which implies $x|_{[0,3] \times \{2\}} = 1111$.
(One may also check that already a single $1111$-row in the preimage will force this periodic structure on the preimage.)
Thus each $W_i$ is self-propagating along a vertical wire in both directions $\vec p = (0, \pm 3)$, and the analogous claim holds for the horizontal signals and wires.
Note that this property holds no matter what the $1$s are surrounded with. Typically we will surround the wires with dead cells, and use a line of live cells of thickness two as the wire.

We say that a preimage of a wire is \emph{charged} if it contains a wire signal.
In Figure~\ref{fig:wire} we show a horizontal wire and its three charged preimages, plus one uncharged preimage. We only show the part of the charged preimages that is forced by the presence of a horizontal wire signal in any part of the wire.
Uncharged preimages of wires will not play a role in the remainder of this article, apart from being something we actively prevent from existing.

\begin{figure}[htp]
  \begin{center}
  \pgfplotstableread{wire.cvs}{\matrixfile}
    \begin{tikzpicture}[scale = 0.25]
\kek
      \draw[black!30!white] (0,-1) grid (12,1);
    \end{tikzpicture}
    
    \vspace{5mm}
  \pgfplotstableread{wirepre1.cvs}{\matrixfile}
    \begin{tikzpicture}[scale = 0.25]
\kek
      \draw[black!30!white] (0,-3) grid (12,1);
    \end{tikzpicture}
    \hspace{2mm}
  \pgfplotstableread{wirepre2.cvs}{\matrixfile}
    \begin{tikzpicture}[scale = 0.25]
\kek
      \draw[black!30!white] (0,-3) grid (12,1);
    \end{tikzpicture}
        \hspace{2mm}
      \pgfplotstableread{wirepre3.cvs}{\matrixfile}
    \begin{tikzpicture}[scale = 0.25]
\kek
      \draw[black!30!white] (0,-3) grid (12,1);
    \end{tikzpicture}
    
    \vspace{5mm}
      \pgfplotstableread{wirepre4.cvs}{\matrixfile}
    \begin{tikzpicture}[scale = 0.25]
\kek
      \draw[black!30!white] (0,-3) grid (14,1);
    \end{tikzpicture}
  \end{center}
  \caption{A horizontal wire and its three charged preimages, plus one uncharged preimage.}
  \label{fig:wire}
\end{figure}

We use the phase of the signal to transmit information within the preimages of a fixed image $y$.
Since each wire has three charged preimages, we can choose two of them -- called its \emph{operating phases} -- to represent the values $0$ and $1$, and leave the third one unused.
We use white dots in figures to represent the unused phase, marking those cells whose preimage should be $0$ regardless of the bit carried by the wire.
In Figure~\ref{fig:wire}, the unused phase, according to how we have marked the wire, is the rightmost one.

The choice of the unused phase depends on the wire, as different gadgets require different conventions.
As for the bits, in a vertical wire we will always interpret a charged wire as containing the logical value $0$ (resp.\ $1$) if, out of the two possible adjacent positions where the $1$s could occur, they occur on the north (resp.\ south) side, i.e.\ in the preimage of the unmarked cells we have a copy of $W_0$ (resp.\ $W_1$). For horizontal wires we use the same convention but rotated 90 degrees counterclockwise, i.e.\ if the $1$s are on the west half of the unmarked bits, we carry a logical $0$-bit.
In Figure~\ref{fig:wire}, the leftmost charged preimage thus encodes a $1$, and the middle one encodes a $0$.

\section{The basic gadgets}
\label{sec:BasicGadgets}


In this section, we introduce the so-called \emph{basic gadgets} which ``manipulate'' signals traversing a wire. Again, a wire is \emph{charged} if (in a particular preimage or family of preimages) its preimage consists of the $W_i$-patterns. As mentioned in the introduction, we verify such properties of our gadgets by SAT solvers, and we have not checked (all of) them through other means; thus, this section simply \emph{describes} the properties of the gadgets, and does not prove them. See Section~\ref{sec:Implementation} for details on how we found and verified the gadgets.

Formally, a rectangle-shaped pattern can be thought of as defining a relation on the cells on the boundary of its preimages. All of our basic gadgets are rectangular patterns satisfying the following common properties, which will allow us to easily combine them into computational gadgets.
Some cells on the boundary of the gadget are part of wires pointing out of the domain. Each of the four borders will have at most one wire.
The gadget defines some relation on the possible phases of signals on the wires, by admitting some combinations in a preimage and forbidding others.
Each gadget carries the information about the operating phases of each of its boundary wires.
Each realizable combination of operating phases on the boundary wires can also be realized with zeroes everywhere else on the boundary.

The last item is a key property, as it allows us to worry only about the wires: the relations on the wires already force whatever computation we want to happen, and for any such computation, we know there is a legal preimage, because we can simply write zeroes everywhere on the boundaries and stitch the gadgets together along them.

We now define a notation for the precise behavior of our gadgets.
The first property we are interested in is its \emph{charging behavior}: which of the boundary wires are required to be charged in an operating phase for the gadget to function properly, and which ones are charged by the gadget itself?
We denote this property by $X \vdash Y$, where $X$ lists those directions among $\{E,S,W,N\}$ that should be charged in an operating phase from the outside, and $Y$ lists those that are automatically charged in an operating phase assuming the wires in $X$ are.
We might give several lists for $X$, with the interpretation that any one choice among them will charge all the wires in $Y$.
For example, the charging behavior $EW, S \vdash EN$ means that in any particular preimage, if the east and west wires are charged, or the south wire is charged, then the east and north wires will also be charged (in their respective operating phases).

The second property is the \emph{relation} between the binary values carried by the boundary wires: which combinations of values are realizable in a preimage?
We denote this property by simply listing all the realizable combinations as binary words.
For example, the relation $EWN \in \{000, 001, 010, 100\}$ means that there are wires on the east, west and north sides of the gadget, and as long as all three are charged in their operating phases, at most one of them can carry a 1-signal.
We remind the reader that by construction of the gadgets, each realizable combination will also be realizable by a preimage with $0$s on the border except near the wires.
The charging behavior and relation of a gadget together constitute its \emph{specs}.




\begin{example}
\label{ex:Specs}
  Consider a hypothetical gadget with the specs
  \[
    W \vdash NE; WNE \in \{001, 010, 100\}.
  \]
  This means that the gadget has wires on the west, north and east sides.
  If the west wire is charged, meaning that a preimage contains a signal in an operating phase, then the two others will be charged as well (in that preimage).
  In any such preimage, exactly one of the wires is in phase $1$, and all such combinations can be realized in preimages where the rest of the boundary consists of $0$s. \qee
\end{example}

We note that the specs refer to a particular orientation of the gate. When a gate is rotated 90 degrees counterclockwise, of course the permutation $(E \, \; N \, \; W \, \; S)$ is applied to all symbols appearing in the specification. In addition, when a wire turns counterclockwise from west to south or east to north, one must flip the bit on that wire in every tuple in the relation. Thus, in the relation, $ENWS = abcd$ turns into $ENWS = d \bar{a} b \bar{c}$, where $\bar 0 = 1, \bar 1 = 0$.

This convention for the meaning of $0$ and $1$ in a wire makes it easy to reason about large circuits, as we need not keep track of an orientation for individual wires. However, when reasoning about individual gadgets (and especially when rotating them) an easier interpretation is that of orientations. A horizontal wire coming into a gadget from the west (resp.\ east) boundary is said to have \emph{near} charge if its charge carries the bit $1$ (resp.\ $0$). Similarly, a vertical wire coming into a gadget from the north (resp.\ south) boundary is said to have \emph{near} charge if its charge carries the bit $1$ (resp.\ $0$). Intuitively, looking at the two operating phases of the signal between any two marked cells, the one closer to the center of the gadget is the near charge. The other operating phase is called a \emph{far} charge. This determines the \emph{affinity} of the charge relative to the gadget.
See Figure~\ref{fig:affinity} for an illustration.

\begin{figure}[htp]
  \begin{center}
    \begin{tikzpicture}[scale = 0.25]
      \pgfplotstableread{wire.cvs}{\matrixfile}
      \kek
      \draw[black!30!white] (0,-1) grid (12,1);
      \node[left] at (-1,0) {Gadget 1};
      \node[right] at (13,0) {Gadget 2};
      \draw (1,-1) -- (1,-3) -- (-9,-3) -- (-9,3) -- (1,3) -- (1,1);
      \draw (11,-1) -- (11,-3) -- (21,-3) -- (21,3) -- (11,3) -- (11,1);

      \begin{scope}[yshift=-6cm]
        \pgfplotstableread{wirepre1.cvs}{\matrixfile}
        \kek
        \draw[black!30!white] (0,-3) grid (12,1);
        \node[left] at (-1,-1) {1-bit: near charge};
        \node[right] at (13,-1) {far charge};
      \end{scope}

      \begin{scope}[yshift=-12cm]
        \pgfplotstableread{wirepre2.cvs}{\matrixfile}
        \kek
        \draw[black!30!white] (0,-3) grid (12,1);
        \node[left] at (-1,-1) {0-bit: far charge};
        \node[right] at (13,-1) {near charge};
      \end{scope}
    \end{tikzpicture}
  \end{center}
  \caption{The affinity of a preimage of a charged wire near the boundary of a gadget.}
  \label{fig:affinity}
\end{figure}

If gadgets are thought of as manipulating the affinity of charges, the specs stay the same when we rotate them. 
Of course the conversion between the two notations is straightforward.
For the reader's comfort we include in all specs the affinity version of the relation, as a subset of $\{\near,\far\}^k$ for a suitable power $k$, with $\near$ and $\far$ denoting respectively the near and far charges.
For example, the gadget of Example~\ref{ex:Specs} has the relation $WNE \in \{\far\far\far, \far\near\near, \near\far\near\}$.


\subsection{Charging a wire: the charger-splitter combo}

A typical preimage of a wire is messy, and does not propagate deterministically, so to get computation going the first thing we need to do is charge our wires.
In other words, we want a gadget such that in every preimage, the wire on the boundary is charged, and there is a preimage where the wire is in this charged state and everything else on the boundary is zero. The \emph{charger gadget}, shown in Figure~\ref{fig:charger}, has such a behavior.

\begin{figure}[htp]
  \pgfplotstableread{charger.cvs}{\matrixfile}
  \begin{center}
    \begin{tikzpicture}[scale = 0.25]
\kek
      \draw[black!30!white] (0,-18) grid (24,1);
      \draw[thick,black] (0,-18) rectangle (24,-2);
      \draw[thick,gray] (12,-2) ellipse (2.5 and 1.25);
    \end{tikzpicture}
  \end{center}
  \caption{The charger gadget where one output is near and one is far. Specs: $\vdash N; N \in \{0, 1\}/\{\near, \far\}$. The gadget is inside the black rectangle, and we show the beginning of a wire on top.}
  \label{fig:charger}
\end{figure}


In words, the precise property of the gadget is the following.
The $4 \times 2$ rectangle at the top boundary, whose bottom half is inside the domain and which has the two black top cells of the wire in its middle (circled in Figure~\ref{fig:charger}), is forced to contain the pattern $W_0$ or $W_1$.
For each of the two, the gadget has a preimage where the boundary is otherwise empty. Intuitively, the behavior of the charger gadget is quite simple: it charges the top wire in an arbitrary operating phase.


A charged wire that starts from a charger is somewhat useless on its own.
In order to transmit information, we need charged wires with two ends. To achieve this, we combine the charger with the \emph{splitter}, depicted in Figure~\ref{fig:splitter}, which splits a signal in two directions.
In fact, we will only use the charger in combination with a splitter.

Note that there is a wire going straight through the splitter, so the north-south behavior is clear, and the wire on the east copies this information. In the shown orientation, the gadget has the side effect that the signal moving to the east is inverted. In terms of affinity, we copy affinity from the side where a bar of four black cells -- the \emph{handle bar} -- sticks out of the wire (on top in the shown orientation).

\begin{figure}[htp]
  \pgfplotstableread{splitter.cvs}{\matrixfile}
  \begin{center}
    \begin{tikzpicture}[scale = 0.25]

     \kek
      \draw[black!30!white] (0,-19) grid (14,1);
      \draw[thick,black] (0,-19) rectangle (14,1);
      \node (handle) at (11,3) {handle bar};
      \draw (handle) edge[-stealth] (8,0.5);
      \draw[gray,thick] (6,-0.5) ellipse (2.5 and 1);
    \end{tikzpicture}
  \end{center}
  \caption{The splitter. Specs: $N, S \vdash ENS; ENS \in \{100, 011\}/\{\far \far \near, \near \near \far\}$.} 
  \label{fig:splitter}
\end{figure}

Combining the charger and splitter, we obtain the \emph{charged turn} gadgets shown in Figure~\ref{fig:ers}. The intuitive behavior of these gadgets is that the input and output wires are charged and their signals are synchronized. Depending on the orientation, the signal may be flipped. In terms of affinity, if the signal meets the handle bar, affinity is kept.

\begin{figure}[htp]
  \begin{center}
    \pgfplotstableread{CScombo1.cvs}{\matrixfile}
    \begin{tikzpicture}[scale = 0.22]
	 \kek
      \draw[black!30!white] (-5,-34) grid (19,3);
      \draw[thick,black] (-5,-34) rectangle (19,3);
    \end{tikzpicture}
     \pgfplotstableread{CScombo2.cvs}{\matrixfile}
    \begin{tikzpicture}[scale = 0.22]
	 \kek
      \draw[black!30!white] (-5,-34) grid (19,3);
      \draw[thick,black] (-5,-34) rectangle (19,3);
    \end{tikzpicture}
  \end{center}
  \caption{The charged turn gadgets. Specs: $\vdash ES$ for both, $ES \in \{01, 10\}/\{\near \far, \far \near\}$ for the first, $ES \in \{00, 11\}/\{\near \near, \far \far\}$ for the second.}
  \label{fig:ers}
\end{figure}

\subsection{Shifting and inverting a signal}

The charged turn allows us to construct charged wires and turn them. However, for now the wires are cumbersome, if not impossible, to work with: our gadgets do not align nicely modulo three (the period of the signals), and we cannot yet shift the phase of the signal. Thus, we introduce a gadget that allows the free manipulation of the phase, namely the \emph{inverter}, shown in Figure~\ref{fig:phaseshifternot}.

\begin{figure}[htp]
   \pgfplotstableread{not.cvs}{\matrixfile}
   \begin{center}
    \begin{tikzpicture}[scale = 0.25]
\kek
      \draw[black!30!white] (0,-17) grid (28,1);
      \draw[thick,black] (0,-17) rectangle (28,1);
      \node (handle) at (7.75,-15.5) {kicker};
      \draw (handle) edge[-stealth] (9.25,-13.5);
      \draw[thick,gray] (10,-12) ellipse (1.5 and 1.5);
    \end{tikzpicture}
  \end{center}
  \caption{The inverter. Specs: $E \vdash W; EW \in \{01, 10\}/\{\near \near, \far \far\}$.}
  \label{fig:phaseshifternot}
\end{figure}

As a mnemonic, the reader may find it useful to visualize the contents of the bottom two inhabited rows as the ``feet'' of the gadget. In terms of these feet, the inverter kicks the signal with its smaller foot, or its \emph{kicker} (in the figure, the westmost one) to invert it. As a side effect it 
increments the phase of the signal by one, i.e.\ the horizontal distance between a marked cell on the west side and one on the east side is $-1 \bmod 3$. The inverter gate propagates a charge in the direction of the kick, which in the figure is from east to west.

Since $2$ and $3$ are coprime, the inverter on its own can be used to invert and phase shift the signal. Namely, we can first use an inverter to change the bit carried by a wire if desired, and then two copies of the inverter to change the phase by $1 \bmod 3$ or four copies to change it by $-1 \bmod 3$. It seems likely that gadgets of similar size exist for all phase-signal transformations, but we have not been able to find any.

\subsection{Wire crossing}

A \emph{wire-crossing} gadget is given in Figure~\ref{fig:crossing}. Note that the gadget performs no charging, and that as a side-effect, phases are inverted (i.e.\ affinity is copied).

\begin{figure}[htp]
  \pgfplotstableread{crossing.cvs}{\matrixfile}
  \begin{center}
    \begin{tikzpicture}[scale = 0.25]
\kek
      \draw[black!30!white] (0,-29) grid (30,1);
      \draw[thick,black] (0,-29) rectangle (30,1);
    \end{tikzpicture}
  \end{center}
  \caption{The wire-crossing gadget. Specs: no charging, $ENWS \in \{0011, 0110, 1001, 1100\}/\{\near \far \near \far, \near \near \near \near, \far \far \far \far, \far \near \far \near\}$.}
  \label{fig:crossing}
\end{figure}

\subsection{A universal logical gate}
\label{sec:logic-gate}

Now that we can freely connect wires and invert their values, all we need is a gate that, together with NOT, is universal. One such gate is specified in Figure~\ref{fig:and}. In terms of affinity, the top output is far if and only if both inputs are far. In global terms, in the shown orientation, seeing the west and east signal phases as the input, and north as the output, this is the $N = W \vee \neg E$ gate; the north signal is $0$ if and only if the west signal is $0$ and the east signal is $1$. By composing with inverters, one can easily produce any standard gate such as an AND gate or an OR gate in any orientation.

\begin{figure}[htp]
  \pgfplotstableread{and.cvs}{\matrixfile}
  \begin{center}
    \begin{tikzpicture}[scale = 0.25]
\kek
      \draw[black!30!white] (0,-33) grid (25,1);
      \draw[thick,black] (0,-33) rectangle (25,1);
    \end{tikzpicture}
  \end{center}
  \caption{A gadget that performs the logical operation $N = W \vee \neg E$. Specs: no charging, $ENW \in \{111, 011, 100, 010\}/\{\far \near \far, \near \near \near, \far \far \far, \near \near \far\}$.}
  \label{fig:and}
\end{figure}

\subsection{Constant charger}

The above set of gadgets is already universal, but we nevertheless present also a gadget that introduces a fixed charge on a wire. It is called the \emph{enforcer} and is depicted in Figure~\ref{fig:enforcer}. As the enforcer sends a fixed charge, it does not determine two phases, so we can arbitrarily choose whether the signal it carries is $0$ or $1$ by aligning it appropriately. We choose to interpret the signal as being sent in the near phase. In the figure we use the usual white dot to mark the cells where the preimage always contains zero, i.e.\ the unused phase, and the white cross denotes the other unused phase coming from the fact the $\far$ bit is never emitted.

\begin{figure}[htp]
   \pgfplotstableread{enforcer.cvs}{\matrixfile}
   \begin{center}
    \begin{tikzpicture}[scale = 0.25,rotate=90]
\kek
      \draw[black!30!white] (0,-31) grid (20,1);
      \draw[thick,black] (0,-31) rectangle (20,-6);
    \end{tikzpicture}
  \end{center}
  \caption{The enforcer. Specs: $\vdash W; W \in \{1\}/\{\near\}$.}
  \label{fig:enforcer}
\end{figure}




\section{Composite gadgets and abstract circuits}
\label{sec:CompositeGadgets}

Our composite gadgets will correspond to the following set of \emph{gate tiles}:
\[ T = \{\gateE, \gateH, \gateV, \gateEN, \gateNW, \gateWS, \gateES, \gateN, \gateT, \gateS, \gateX, \gateO\}. \]
Before describing the gadgets, we give these symbols abstract component semantics, which mostly correspond to their usual meanings as used in digital logic. A \emph{well-formed circuit} is any element of the SFT $X_T \subset T^{\Z^2}$ where in adjacent tiles all wires that reach the boundary of the tile must continue in the neighboring tile.
If $T$ is interpreted as a set of Wang tiles over the color set $C = \{\text{no wire}, \text{wire}\}$ (ignoring the fact that some combinations of colors correspond to several distinct tiles), then $X_T$ is precisely the SFT defined by them as Wang tiles.

A circuit is \emph{satisfiable} if it is possible to pick a \emph{signal} $s \in \{0,1\}$ for each of the wires in each tile reaching its boundary, so that the following hold:
\begin{itemize}
\item In the tiles $\gateH, \gateV, \gateEN, \gateNW, \gateWS, \gateES$ the signals at both wires are the same.
\item In $\gateN$, the signals in the two wires are distinct.
\item In $\gateT$, the signal in the unique wire is $1$.
\item In $\gateS$, the signal in all the wires is the same.
\item In $\gateX$, the north and south signals are equal, and the east and west signals are equal.
\item In $\gateO$, the east signal is $1$ if and only if at least one of the north and south signals is $1$. (This is the logical OR operation.)
\end{itemize}
Write $Z_T \subset X_T \times ((\{0,1,\bot\})^4)^{\Z^2}$ for the subshift encoding the satisfying assignments (where the $\{0,1,\bot\})^4$-element in each cell codes the values of signals, and $\bot$ is used iff there is no wire), and $f : Z_T \to T^{\Z^2}$ the natural projection. (For technical reasons, which will become clear later, the codomain is be taken to be a full shift.) We denote $f(Z_T) = Y_T \subset X_T$. The \emph{circuit system} is the triple $(f, Z_T, T^{\Z^2})$.

Note that we are ``missing'' some obvious gates, like an AND gate and various orientations of gates, which would be useful for many engineering purposes, but they can be ``simulated'' easily (even in a precise sense as discussed in Section~\ref{sec:Universality}). 


We now explain what our composite gadgets should do, and the correspondence with gate tiles. To each gate tile, we will associate a $450 \times 450$ pattern with four fixed positions on the east, north, west and south boundaries, called its \emph{connector positions}. The east and west connector positions have the same y-coordinate, and the north and south connector positions have the same x-coordinate, so that the connector positions of adjacent gate tiles will match. A wire enters the gadget from a connector position if and only if the corresponding gate tile has an outgoing wire on that side. Thus the SFT rule of $X_T$ translates into ``wires should continue across the borders of adjacent composite gadgets''.

The rest of the boundary of each composite gadget is all zero. In every $g$-preimage of a composite gadget, all the boundary wires are charged in one of two phases. The two phases used are the ones where the $1$s are on either side of the boundary (the phase where we have zeroes around the boundary never extends to a preimage of a gadget). We refer to the choice of which two phases are used for bits as the \emph{phase alignment}, and the aforementioned choice of positioning of the used phases is referred to as the \emph{neutral alignment}. 

Interpreting the signal values as in the previous section, i.e.\ a representing a $1$-signal by having the $1$s on the east and south sides of the boundary, the allowed combinations for the signals should be precisely the ones described above for the abstract semantics, and furthermore for any such choice the rest of the boundary of each composite gadget should be fillable with zeroes.

We construct such gadgets in two parts. First, we give $90 \times 90$ blocks that simply connect the corresponding gadget from Section~\ref{sec:BasicGadgets} to a consistent position on each border (we position our wire boundaries at the $29$th and $30$th column/row), in some cases using some inverters. We cannot simply put the resulting $90 \times 90$ patterns together, since the wires are not necessarily charged, and the phase alignments of neighboring patterns may not match.

To solve these issues, we introduce horizontal (resp.\ vertical) ``charged wire gadgets'' of shape $180 \times 90$ (resp. $90 \times 180$). These gadgets should charge the wires, and we construct one for each of the nine possible alignment changes. We can then position our basic gadgets in the middle $90 \times 90$ squares of $450 \times 450$ squares, and use the remainder of the space to put charged wire gadgets on the appropriate sides, changing all phase alignments to the neutral one.

Figure~\ref{fig:450} shows the $450 \times 450$ gadget corresponding to the OR gate.
The central $90 \times 90$ part is constructed from the logical cate of Section~\ref{sec:logic-gate} by adding inverters to two of its wires, so that its semantics match the standard OR gate.
Each of the three charged wire gadgets consists of four charged turns that charge the wire and some number of inverters that correct its phase to the neutral one.

\newcommand{\picwd}{7}
\begin{figure}[htp]
  \begin{center}
    \begin{tikzpicture}
      \node at (0,0) {\includegraphics[width=\picwd cm,height=\picwd cm]{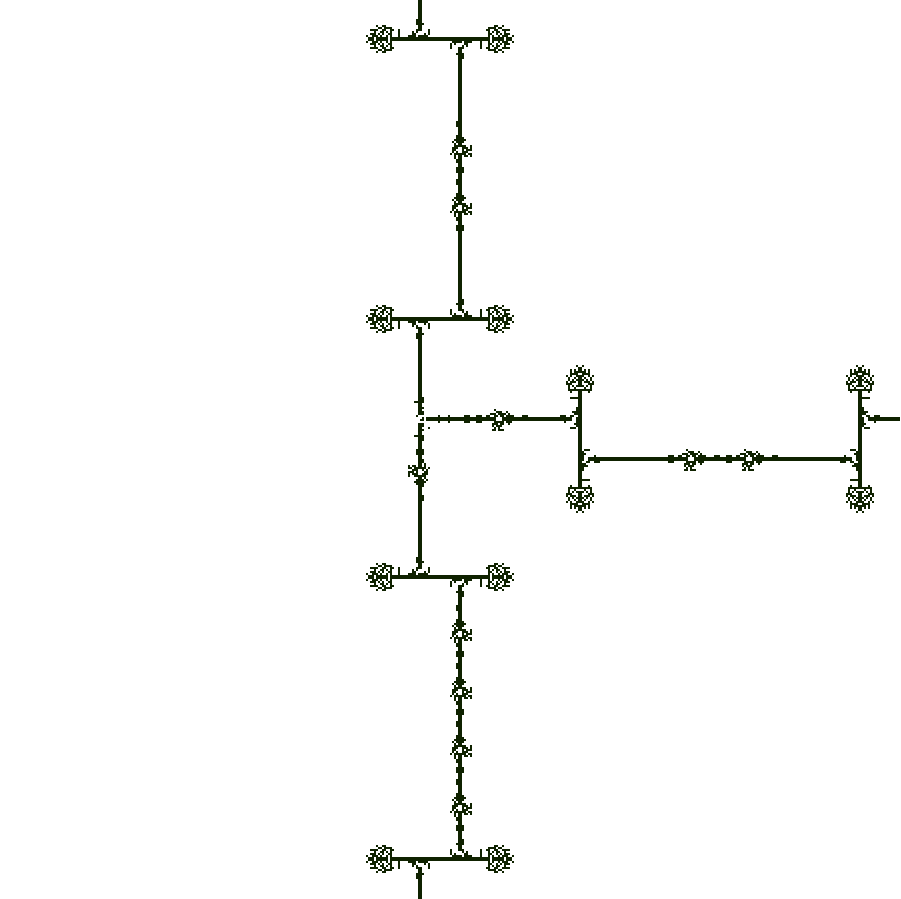}};
      \draw (-\picwd/2,-\picwd/2) rectangle (\picwd/2,\picwd/2);
      \draw [dashed] (-\picwd/2,-\picwd/10) -- (\picwd/2,-\picwd/10);
      \draw [dashed] (-\picwd/2,\picwd/10) -- (\picwd/2,\picwd/10);
      \draw [dashed] (-\picwd/10,-\picwd/2) -- (-\picwd/10,\picwd/2);
      \draw [dashed] (\picwd/10,-\picwd/2) -- (\picwd/10,\picwd/2);
    \end{tikzpicture}
  \end{center}
  \caption{A $450 \times 450$ composite gadget for the OR gate.}
  \label{fig:450}
\end{figure}

We provide Golly-compatible patterns corresponding to this description in the repository \cite{pat-repo}. If one has memorized the specs of the basic gadgets given in Section~\ref{sec:BasicGadgets}, it is a reasonably simple eyeballing exercise to verify that they indeed correctly implement the semantics of the gate tiles. 
In addition to having performed this eyeballing exercise several times ourselves, we have checked the semantics of the composite gadgets (but not the $450 \times 450$ squares) by a SAT solver.
A Python script that performs this check for all the gadgets of Section~\ref{sec:BasicGadgets}, the corresponding $90 \times 90$ squares, and the $180 \times 90$ charged wires, can also be found in \cite{pat-repo}; its running time is only a few seconds.
The SAT instances that verify the gadgets are not difficult to generate programmatically, so a skeptical reader can also write their own verification script.

Once we have the gadgets, it is easy to see that for any satisfiable circuit, the corresponding tiling by composite gadgets has a preimage: satisfy the abstract circuit, copy the signal values (using the interpretation of values which is consistent among neighboring composite gadgets), and fill the rest of the boundaries with zeroes. On the other hand, if the composite gadget has a preimage, then already the values on the wires prove the satisfiability of the circuit. Thus, we obtain the following theorem -- which is, again, valid assuming the verification script is correct.

\begin{theorem}
\label{thm:Sub}
There exists a substitution $\tau : T \to \{0,1\}^{450 \times 450}$ such that for $x \in X_T$, $\tau(x)$ has a $g$-preimage if and only if $x \in Y_T$.
\end{theorem}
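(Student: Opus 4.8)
The plan is to spell out the construction of $\tau$ and then carry out the gluing argument already sketched before the statement, reducing everything to the SAT-solver checks described in Section~\ref{sec:Implementation}. First I would fix, once and for all, the convention that a wire carries the bit $1$ when the live cells of its signal sit on the south side (for a vertical wire) or the east side (for a horizontal wire) of the wire boundary. Then, for each gate tile $t \in T$, I would build $\tau(t) \in \{0,1\}^{450 \times 450}$ from three ingredients: a central $90 \times 90$ block carrying the basic gadget(s) of Section~\ref{sec:BasicGadgets} that realize the relation of $t$ (plus a bounded number of inverters to align phases), with its wire boundaries at rows and columns $29$ and $30$; the $180 \times 90$ and $90 \times 180$ charged wire gadgets placed on the appropriate sides, one per alignment transition, so that every wire is charged and each boundary wire ends up in the same ``neutral'' phase alignment; and $0$s everywhere else, in particular on a border region of width at least $2$ around $\tau(t)$ apart from the at most four wire stubs that reach the special boundary positions. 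A stub reaches a given side of $\tau(t)$ exactly when $t$ has an outgoing wire there, so for $x \in X_T$ the matching rule of $X_T$ says precisely that stubs of adjacent blocks $\tau(x_{\vec v})$ line up, and $\tau(x)$ is a well-defined configuration.

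From the gadget verification I would import two statements for each $t$: (i) in every $g$-preimage of $\tau(t)$, every boundary wire is charged in an operating phase of the neutral alignment and the bits it carries satisfy exactly the constraint that the abstract semantics impose on $t$; and (ii) conversely, for every bit assignment to the boundary wires of $t$ satisfying that constraint, there is a $g$-preimage of $\tau(t)$ realizing those bits on the wires in the neutral alignment and equal to $0$ on the whole width-$2$ border region apart from the wire stubs (where it displays the corresponding charged signal). Property (ii) for the composite gadgets is exactly where the ``fillable with zeroes'' clause and the charged wire gadgets do their work, and I expect the verification of (i)--(ii) --- the part offloaded to the computer --- to be the real obstacle; the remainder is bookkeeping.

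For the ``if'' direction, suppose $x \in Y_T = f(Z_T)$ and fix $(x,s) \in Z_T$, i.e.\ a choice of signals satisfying every gate constraint. For each $\vec v \in \Z^2$ the restriction of $s$ to the wires of the tile $x_{\vec v}$ satisfies that tile's constraint, so by (ii) there is a $g$-preimage $P^{(\vec v)}$ of the $\vec v$-th block (which equals $\tau(x_{\vec v})$ up to translation) carrying exactly those bits in the neutral alignment and equal to $0$ on its border region off the stubs. Two adjacent such preimages overlap in a strip of width $2$ contained in both border regions; on the common wire stub both display the charged neutral-alignment signal for the single bit $s$ assigns to that segment, and off the stub both are $0$, so they agree. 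Hence the $P^{(\vec v)}$ glue to $z \in \{0,1\}^{\Z^2}$, and since $g$ has radius $1$ and $g(P^{(\vec v)})$ agrees with $\tau(x)$ on the $450 \times 450$ interior of each block, $g(z) = \tau(x)$.

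For the ``only if'' direction, let $z$ be any $g$-preimage of $\tau(x)$. For each $\vec v$ the restriction of $z$ to the $452 \times 452$ neighborhood of the $\vec v$-th block is a $g$-preimage of $\tau(x_{\vec v})$, so by (i) each boundary wire there is charged in the neutral alignment and carries bits consistent with that tile's constraint. Reading the bit carried on each wire segment that crosses a block boundary --- well defined, since the ``$1 =$ south/east'' convention makes the two adjacent blocks agree on the value --- yields a global signal assignment $s$ satisfying every gate constraint, so $(x,s) \in Z_T$ and $x \in f(Z_T) = Y_T$. In particular, if $x \in X_T \setminus Y_T$ then $\tau(x)$ has no $g$-preimage. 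This completes the proof modulo the gadget verification, which is the only step I would not attempt by hand.
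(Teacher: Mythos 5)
Your proposal is correct and follows essentially the same route as the paper: the same decomposition of each $450\times450$ macrotile into a central $90\times90$ block realizing the basic gadget plus inverters, flanked by charged wire gadgets resetting every boundary wire to the neutral alignment, with the two directions of the equivalence handled respectively by gluing zero-bordered preimages along width-$2$ overlap strips and by reading the wire bits off an arbitrary preimage. You also correctly isolate the computer-verified gadget properties (your (i) and (ii)) as the only non-bookkeeping content, which is exactly where the paper places the burden of proof.
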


\section{A concrete example: the Jeandel-Rao tile set}
\label{sec:JeandelRao}

Before delving deeper into universality, we show how to simulate an arbitrary Wang tile set with at most four edge colors in a direct fashion, without explicitly referring to universality. We also optimize the size of the simulation slightly, from $450 \times 450$ to $270 \times 270$, by not resetting to neutral phase alignment between patterns, but simply directly using, between two composite gadgets, the correct charged wire gadget connecting their phase alignments.

The generalization of the construction of this section to an arbitrary tile set is straightforward, and can be used to give an alternative proof of Lemma~\ref{thm:CircuitsStronglyUniversal}, which for small tile sets gives smaller and simpler implementations.

Represent the colors of a Wang tile by eight bits $x_1, x_2, \ldots, x_8$, with the convention that $(x_2, x_1)$ are the west color, $(x_3, x_4)$ the north color, $(x_5, x_6)$ is the east color, and $(x_8, x_7)$ is the south color. A set of $k$ Wang tiles then becomes a set of $k$ words in $\{0,1\}^8$, and we can represent it as a logical formula in disjunctive normal form with $k$ clauses.

\begin{figure}[htp]
  \pgfplotstableread{JeandelRao.cvs}{\matrixfile}
  \begin{center}
    \begin{tikzpicture}[scale = 0.4]

      \pgfplotstablegetrowsof{\matrixfile} 
      \pgfmathtruncatemacro{\totrow}{\pgfplotsretval}
      \pgfplotstablegetcolsof{\matrixfile} 
      \pgfmathtruncatemacro{\totcol}{\pgfplotsretval}
      \draw[black!30!white] (0,-43) grid (23,1);
      \pgfplotstableforeachcolumn\matrixfile\as\col{
        \pgfplotstableforeachcolumnelement{\col}\of\matrixfile\as\colcnt{%
          \ifnum\colcnt=1
		 \draw ($ -\pgfplotstablerow*(0,\cellht) + \col*(\cellwd,0) +(0,0.5)$) -- ($ -\pgfplotstablerow*(0,\cellht) + \col*(\cellwd,0) + (1,0.5)$);
		 \fi
		 \ifnum\colcnt=2
		 \draw ($ -\pgfplotstablerow*(0,\cellht) + \col*(\cellwd,0) +(0.5,0)$) -- ($ -\pgfplotstablerow*(0,\cellht) + \col*(\cellwd,0) + (0.5,1)$);
		 \fi
		 \ifnum\colcnt=3 
		 \draw ($ -\pgfplotstablerow*(0,\cellht) + \col*(\cellwd,0) +(1,0.5)$) -- ($ -\pgfplotstablerow*(0,\cellht) + \col*(\cellwd,0) + (0.5,0.5)$);
		 \draw ($ -\pgfplotstablerow*(0,\cellht) + \col*(\cellwd,0) +(0.5,0.5)$) -- ($ -\pgfplotstablerow*(0,\cellht) + \col*(\cellwd,0) + (0.5,1)$);
		 \fi
		 \ifnum\colcnt=4 
		 \draw ($ -\pgfplotstablerow*(0,\cellht) + \col*(\cellwd,0) +(0,0.5)$) -- ($ -\pgfplotstablerow*(0,\cellht) + \col*(\cellwd,0) + (0.5,0.5)$);
		 \draw ($ -\pgfplotstablerow*(0,\cellht) + \col*(\cellwd,0) +(0.5,0.5)$) -- ($ -\pgfplotstablerow*(0,\cellht) + \col*(\cellwd,0) + (0.5,1)$);
		 \fi'
		 \ifnum\colcnt=5 
		 \draw ($ -\pgfplotstablerow*(0,\cellht) + \col*(\cellwd,0) +(0,0.5)$) -- ($ -\pgfplotstablerow*(0,\cellht) + \col*(\cellwd,0) + (0.5,0.5)$);
		 \draw ($ -\pgfplotstablerow*(0,\cellht) + \col*(\cellwd,0) +(0.5,0.5)$) -- ($ -\pgfplotstablerow*(0,\cellht) + \col*(\cellwd,0) + (0.5,0)$);
		 \fi
		 \ifnum\colcnt=6 
		 \draw ($ -\pgfplotstablerow*(0,\cellht) + \col*(\cellwd,0) +(1,0.5)$) -- ($ -\pgfplotstablerow*(0,\cellht) + \col*(\cellwd,0) + (0.5,0.5)$);
		 \draw ($ -\pgfplotstablerow*(0,\cellht) + \col*(\cellwd,0) +(0.5,0.5)$) -- ($ -\pgfplotstablerow*(0,\cellht) + \col*(\cellwd,0) + (0.5,0)$);
		 \fi
		 \ifnum\colcnt=7 
		 \draw ($ -\pgfplotstablerow*(0,\cellht) + \col*(\cellwd,0) +(0,0.5)$) -- ($ -\pgfplotstablerow*(0,\cellht) + \col*(\cellwd,0) + (1,0.5)$);
		 \filldraw[black] ($ -\pgfplotstablerow*(0,\cellht) + \col*(\cellwd,0) +(0.5,0.5)$) circle (0.15);
		 \fi
		 \ifnum\colcnt=8 
		 \draw ($ -\pgfplotstablerow*(0,\cellht) + \col*(\cellwd,0) +(0,0.5)$) -- ($ -\pgfplotstablerow*(0,\cellht) + \col*(\cellwd,0) + (0.5,0.5)$);
		 \draw ($ -\pgfplotstablerow*(0,\cellht) + \col*(\cellwd,0) +(0.5,0.5)$) -- ($ -\pgfplotstablerow*(0,\cellht) + \col*(\cellwd,0) + (0.7,0.7)$);
		 \draw ($ -\pgfplotstablerow*(0,\cellht) + \col*(\cellwd,0) +(0.5,0.5)$) -- ($ -\pgfplotstablerow*(0,\cellht) + \col*(\cellwd,0) + (0.7,0.3)$);
		 \fi
		 \ifnum\colcnt=9 
		 \draw ($ -\pgfplotstablerow*(0,\cellht) + \col*(\cellwd,0) +(0.5,0.5)$) -- ($ -\pgfplotstablerow*(0,\cellht) + \col*(\cellwd,0) + (1,0.5)$);
		 \draw ($ -\pgfplotstablerow*(0,\cellht) + \col*(\cellwd,0) +(0.5,0)$) -- ($ -\pgfplotstablerow*(0,\cellht) + \col*(\cellwd,0) + (0.5,1)$);
		 \fi
		 \ifnum\colcnt=10 
		 \draw ($ -\pgfplotstablerow*(0,\cellht) + \col*(\cellwd,0) +(0,0.5)$) -- ($ -\pgfplotstablerow*(0,\cellht) + \col*(\cellwd,0) + (1,0.5)$);
		 \draw ($ -\pgfplotstablerow*(0,\cellht) + \col*(\cellwd,0) +(0.5,0)$) -- ($ -\pgfplotstablerow*(0,\cellht) + \col*(\cellwd,0) + (0.5,1)$);
		 \fi
		 \ifnum\colcnt=11 
		 \draw ($ -\pgfplotstablerow*(0,\cellht) + \col*(\cellwd,0) +(0.5,1)$) -- ($ -\pgfplotstablerow*(0,\cellht) + \col*(\cellwd,0) + (0.5,0.75)$);
		 \draw ($ -\pgfplotstablerow*(0,\cellht) + \col*(\cellwd,0) +(0.5,0)$) -- ($ -\pgfplotstablerow*(0,\cellht) + \col*(\cellwd,0) + (0.5,0.25)$);
		 \draw ($ -\pgfplotstablerow*(0,\cellht) + \col*(\cellwd,0) +(0.8,0.5)$) -- ($ -\pgfplotstablerow*(0,\cellht) + \col*(\cellwd,0) + (1,0.5)$);
		 \draw ($ -\pgfplotstablerow*(0,\cellht) + \col*(\cellwd,0) +(0.25,0.2)$) edge[bend right=80,in=-90, out=-90, looseness = 1] ($ -\pgfplotstablerow*(0,\cellht) + \col*(\cellwd,0) + (0.25,0.8)$);
		 \draw ($ -\pgfplotstablerow*(0,\cellht) + \col*(\cellwd,0) +(0.25,0.2)$) edge[bend right=80,in=-90, out=-90, looseness = 3] ($ -\pgfplotstablerow*(0,\cellht) + \col*(\cellwd,0) + (0.25,0.8)$);
		 \fi
		 \ifnum\colcnt=12 
		 \draw ($ -\pgfplotstablerow*(0,\cellht) + \col*(\cellwd,0) +(0,0.5)$) -- ($ -\pgfplotstablerow*(0,\cellht) + \col*(\cellwd,0) + (0.2,0.5)$);
		 \draw ($ -\pgfplotstablerow*(0,\cellht) + \col*(\cellwd,0) +(0.8,0.5)$) -- ($ -\pgfplotstablerow*(0,\cellht) + \col*(\cellwd,0) + (1,0.5)$);
		 \node () at ($ -\pgfplotstablerow*(0,\cellht) + \col*(\cellwd,0) +(0.5,0.5)$) {\footnotesize $?$};
		 \fi
       }        
      }
     
      \draw[thick,black] (0,-43) rectangle (23,1);
      
      \node[anchor=east] (rows) at (-1,-22) {rows $17$--$28$};
      \draw [decorate, decoration = {brace}] (-0.5,-28) --  (-0.5,-16);
      
      \node () at (-0.7,-3.5) {$x_1$};
      \node () at (-0.7,-2.5) {$x_2$};
      
      \node () at (4.5,1.7) {$x_3$};
      \node () at (3.5,1.7) {$x_4$};

      \node () at (23.7,-3.5) {$x_6$};
      \node () at (23.7,-2.5) {$x_5$};
      
      \node () at (4.5,-43.7) {$x_8$};
      \node () at (3.5,-43.7) {$x_7$};
      
    \end{tikzpicture}
  \end{center}
  \caption{The ``blueprint'' for simulating a tile set by preimages of $g$.}
  \label{fig:Blueprint}
\end{figure}

Consider Figure~\ref{fig:Blueprint}. Repeat the rows 17--28 (0-indexed) $k-2$ times. More precisely, seeing the figure as a matrix, add $12(k-2)$ more rows immediately after the $28$th, copying the contents periodically from rows $17$--$28$, and leave the first $17$ and last $15$ rows intact.

Next, observe that, picking the ?s to be either horizontal wires or NOT-gates, we can interpret the binary tree of $7$ OR-gates followed by a NOT gate as a conjunctive clause (using De Morgan's law, i.e.\ inverting the inputs). We can then interpret the column of ORs on the right as a disjunction of clauses.

Finally, we replace the gate tiles by the $90$-by-$90$ blocks from the repository \cite{pat-repo}, and connect them with charged wire gadgets -- the minor additional complication is now that we need to pick the phases context-sensitively, i.e.\ we have to make sure that the outgoing phase of a wire going out from one gadget is matched with the incoming phase of the neighboring one. This can be done, as the repository lists the gadgets for connecting any pair of phases.

\begin{theorem}
\label{thm:AperiodicOnly}
  There exists a totally periodic configuration $x \in \{0,1\}^{\Z^2}$ with periods $6210 \times 37800$ that has a $g$-preimage, but no periodic preimage.
\end{theorem}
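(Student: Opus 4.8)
The plan is to run the Wang-tile simulation of Section~\ref{sec:JeandelRao} on the Jeandel--Rao aperiodic tile set $\mathcal{T}$ of \cite{JeRa21}, which has $11$ tiles over at most four edge colors per direction, tiles the plane, and admits no totally periodic tiling. First I would write $\mathcal{T}$ as a DNF formula with $11$ clauses as in that section, build the blueprint of Figure~\ref{fig:Blueprint} with the middle block repeated $11-2$ times, replace the gate tiles by the $270 \times 270$ composite blocks (the $90 \times 90$ appendix blocks connected by the suitable charged-wire gadgets), and let $B \in \{0,1\}^{6210 \times 37800}$ be the resulting pattern; here $6210 = 23 \cdot 270$ and $37800 = 140 \cdot 270$ are the width and height of the blueprint in composite blocks times $270$. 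Let $x$ be the configuration obtained by tiling $\Z^2$ periodically with $B$, so $x$ has periods $6210 \times 37800$.

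By correctness of the construction (the circuit-simulation content of Theorem~\ref{thm:Sub}, applied cell by cell), the $g$-preimages of $x$ are precisely the encodings of $\mathcal{T}$-tilings of $\Z^2$. In one direction: in any preimage $y$ every wire is charged, so reading the signal phases off the wires inside the blueprint-cell at grid position $(m,n) \in \Z^2$ yields eight bits encoding a tile $\phi(y)_{(m,n)}$; the DNF sub-circuit forces each such $8$-tuple to be a genuine element of $\mathcal{T}$, and the wires that cross the shared boundary of two adjacent blueprint-cells force the east colour of one to equal the west colour of the other (and likewise north/south), so $\phi(y)$ is a valid $\mathcal{T}$-tiling. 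In the other direction, any $\mathcal{T}$-tiling determines a satisfying assignment of the periodic circuit, and filling the rest of every composite block's boundary with zeros as in Theorem~\ref{thm:Sub} produces a $g$-preimage of $x$. Since $\mathcal{T}$ tiles $\Z^2$, the second direction shows $x$ has a $g$-preimage.

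Next I would rule out periodic preimages. Suppose for contradiction that $y \in g^{-1}(x)$ is totally periodic, with full-rank period lattice $L \subseteq \Z^2$. Let $\Lambda = 6210\Z \times 37800\Z$ be the lattice of translations permuting the blueprint-cells; then $L' = L \cap \Lambda$ is again full-rank. For $\vec v = (6210\,a, 37800\,b) \in L'$ we have $\sigma^{\vec v}(y) = y$, and $\vec v$ carries the region of blueprint-cell $(m,n)$, together with all of its boundary wires, exactly onto that of cell $(m+a, n+b)$; hence $\phi(y)_{(m+a,n+b)} = \phi(y)_{(m,n)}$ for all $(m,n)$. Thus $\phi(y)$ is a $\mathcal{T}$-tiling whose period lattice contains the full-rank lattice $\{(a,b) : (6210a, 37800b) \in L'\}$, i.e.\ a totally periodic $\mathcal{T}$-tiling, contradicting aperiodicity of $\mathcal{T}$. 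Therefore $x$ has no totally periodic $g$-preimage.

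The only delicate point is the lattice bookkeeping of the previous paragraph: a period of $y$ descends to a period of the decoded tiling precisely because we first intersect with $\Lambda$, so that the period vector is an integer multiple of the blueprint-cell dimensions and no wire or gadget is sheared relative to the cell grid. Everything else is assumed: correctness of the $90 \times 90$ blocks and charged-wire gadgets, and of their composition into a Wang-tile simulation with the stated preimage correspondence, is exactly what Section~\ref{sec:JeandelRao} provides (ultimately SAT-verified). If one wants the stronger conclusion that $x$ has no preimage periodic in even a single direction, one replaces $L$ above by $\Z\vec v$ for the period vector $\vec v \neq \vec 0$ (noting $\Z\vec v \cap \Lambda \neq \{\vec 0\}$), concludes that $\phi(y)$ is a $\mathcal{T}$-tiling with a nonzero period, and invokes the standard fact that a two-dimensional SFT possessing a point of nonzero period possesses a fully periodic point, which again contradicts aperiodicity of $\mathcal{T}$. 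I expect the remaining write-up effort to be purely arithmetic: confirming that the blueprint for $11$ clauses is $23$ composite blocks wide and $140$ tall, which yields the periods $6210 \times 37800$.
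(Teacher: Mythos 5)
Your proposal is correct and follows exactly the paper's route: instantiate the Section~\ref{sec:JeandelRao} blueprint with the $11$-tile Jeandel--Rao set, count $23$ columns and $140$ rows of $270\times 270$ composite blocks to get the periods $6210\times 37800$, and use aperiodicity of the tile set to exclude periodic preimages. The paper's own proof consists only of this arithmetic (delegating the preimage-to-tiling correspondence to Sections~\ref{sec:CompositeGadgets} and~\ref{sec:JeandelRao}), so your explicit spelling-out of the two-way correspondence and the lattice-intersection argument for descent of periods is a faithful, more detailed version of the same argument.
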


\begin{proof}
The aperiodic Jeandel-Rao tile set \cite{JeRa21} uses only four colors on each side and has $11$ tiles. With $k = 11$, after the rows $19$--$31$ have been repeated $k-2$ times we have an array with $17 + 12 \cdot 9 + 15 = 140$ rows and $23$ columns. Replacing each cell by the appropriate $270 \times 270$ pattern, we get the claimed periods of $6210 \times 37800$.
\end{proof}

The fundamental domain of the configuration $x$ is shown in Figure~\ref{fig:JeandelRao}.

\section{Universality}
\label{sec:Universality}

In this section, we give more precise universality definitions, and state a stronger version of Theorem~\ref{thm:Sub}. Our precise definition of simulation is a bit involved, and readers without a background or an interest in universality per se may be better off skipping straight to Section~\ref{sec:Corollaries}.
Indeed, the proof of Theorem~\ref{thm:Sub} and the construction from Section~\ref{sec:JeandelRao} can more or less replace the universality discussion in terms of understanding Section~\ref{sec:Corollaries} and proving the results therein.




\begin{definition}
  Let $X \subset A^{\Z^2}$ be a subshift.
  For $m, n \in \N$, the \emph{$(m, n)$-blocking} of $X$ is the $\Z^2$-dynamical system $(X, m\Z \times n\Z)$, where $\Z^2$ acts as $(i,j) \cdot x = \sigma^{(mi, nj)}(x)$.
  It can be seen as a subshift over the alphabet of blocks $A^{m \times n}$ in a natural way.
  Every block map $\psi : X \to Y$ lifts into a block map between blockings $\psi^{m \times n} : (X, m\Z \times n\Z) \to (Y, m\Z \times n\Z)$, called the $(m, n)$-blocking of $\psi$.
  We denote it simply by $\psi$ when there is no danger of confusion.
\end{definition}

We interpret the $(m,n)$-blocking of a subshift $X$ as a set of configurations with a superimposed $m \times n$ grid.
A block map $f : (X, m\Z \times n\Z) \to Y$ needs to satisfy the relation $f(\sigma^{(mi, nj)}(x)) = \sigma^{(i,j)}(f(x))$ for all $(i,j) \in \Z^2$ and $x \in X$, that is, a shift by $(1,0)$ on the codomain side corresponds to a shift by $(m,0)$ on the domain side, and similarly for $(0,1)$ and $(0,n)$.
Similarly, a substitution $\tau : A \to B^{m \times n}$ of shape $m \times n$ can be interpreted as a block map $\tau : A^{\Z^2} \to (B^{\Z^2}, m\Z \times n\Z)$.

In the definition of simulations, we discuss commutation of diagrams of \emph{partial functions}.
To make sense of this, we make explicit our formalism of partial functions.

\begin{definition}
  A \emph{partial function} $f : A \nrightarrow B$ consists of a domain $\mathrm{dom}(f) \subset A$ and an underlying function $\mathrm{fun}(f) : \mathrm{dom}(f) \to B$, which we may abusively denote by $f$.
  Two partial functions are equal if they have equal domains and underlying functions.
  We say that $f$ is injective or surjective if $\mathrm{fun}(f)$ is, and bijective if $\mathrm{dom}(f) = A$ and $\mathrm{fun}(f)$ is bijective.
  The composition $g \circ f : A \nrightarrow C$ of two partial functions $f : A \nrightarrow B$ and $g : B \nrightarrow C$ is the partial function with domain $\mathrm{dom}(f) \cap \mathrm{fun}(f)^{-1}(\mathrm{dom}(g))$ and the obvious underlying function.
\end{definition}

\begin{definition}
  Let $\psi : X \to B^{\Z^2}$ be a block map, where $X \subset A^{\Z^2}$, and let $C \subset B$.
  The \emph{alphabet co-restriction to $C$} of $\psi$ is the partial block map $\psi_C : X \nrightarrow B^{\Z^2}$ with domain $\psi^{-1}(C^{\Z^2})$ and underlying function the restriction of $\psi$.
\end{definition}



\begin{definition}
\label{def:Simulations}
Let $\psi : X \to B^{\Z^2}, \phi : Y \to D^{\Z^2}$ be block maps, where $X \subset A^{\Z^2}, Y \subset C^{\Z^2}$ are subshifts of finite type. We say that $\psi$ \emph{weakly simulates} $\phi$ if there exist $m, n \in \N$, an injective substitution $\tau : D \to B^{m \times n}$, and a surjective partial block map $h : (X, m\Z \times n\Z) \nrightarrow Y$, such that the following diagram of partial functions commutes:
\begin{equation}
  \label{eq:Diagram}
  \begin{tikzcd}
    (X, m\Z \times n\Z) \arrow[negate]{r}{h} \arrow[swap,negate]{d}{\psi_{\tau(D)}} & Y \arrow{d}{\phi} \\
    (B^{\Z^2}, m\Z \times n\Z) & D^{\Z^2} \arrow{l}{\tau}
  \end{tikzcd}
\end{equation}
We say $\psi$ \emph{semiweakly simulates} $\phi$ if $h$ admits a block map section, that is, there exists a block map $\chi : Y \to (X, m\Z \times n\Z)$ with $h \circ \chi = \mathrm{id}_Y$. We say $\psi$ \emph{strongly simulates} $\phi$ if $h$ is bijective (and hence a conjugacy).
\end{definition}

Note that the leftmost arrow in the above diagram is an alphabet co-restriction of a blocking of $\psi$.
In particular, the domain of $h$ equals $\phi^{-1}(\tau(D^{\Z^2}))$.
The images $\tau(d) \in B^{m \times n}$ for $d \in D$ are sometimes called ``macrotiles''.

A weak simulation of $\phi$ by $\psi$ can be defined by the quadruple $(m,n,\tau,h)$.
In applications, we often want to produce this data algorithmically.
It is important to use the convention that to specify the partial block map $h$, it suffices to specify any block map $h' : (A^{\Z^2}, m\Z \times n\Z) \to C^{\Z^2}$ that agrees with $h$ on $\mathrm{dom}(h)$, and otherwise may produce configurations that are not even in $Y$.
In particular, we do not need to actually compute $\mathrm{dom}(h)$.

The full shift on the image side should be seen as an ``ambient full shift'' around the image that we are actually interested in. On the preimage side it is natural to use SFTs rather than full shifts, because alphabet co-restrictions of $\psi : X \to B^{\Z^2}$ can turn the domain into a proper SFT even if $X$ is a full shift.

We first prove that the simulation relations respect conjugacies and are transitive.

\begin{lemma}
  \label{lem:Conjugacy}
  Let $X \subset A^{\Z^2}$ be an SFT and $\psi : X \to B^{\Z^2}$ a block map.
  Let $\alpha : X \to Y$ be a conjugacy.
  Then $\psi$ strongly simulates $\psi \circ \alpha^{-1} : Y \to B^{\Z^2}$, and the simulation can be effectively computed from the local rules of $\psi$ and $\alpha$.
\end{lemma}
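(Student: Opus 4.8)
The plan is to verify the definition of strong simulation directly, using the most economical possible data: $m = n = 1$, the identity substitution, and the conjugacy $\alpha$ itself in the role of the map $h$. Concretely, put $D = B$ and $\phi = \psi \circ \alpha^{-1} : Y \to B^{\Z^2}$. Since $\alpha$ is a conjugacy, $\alpha^{-1} : Y \to X$ is a block map, so $\phi$ is a block map as a composition of block maps, and a neighborhood and local rule for it can be computed from those of $\psi$ and $\alpha^{-1}$; the latter is obtained effectively from the local rule of $\alpha$ by the standard fact (recalled in Section~2) that the inverse of a conjugacy is a block map, since one can search for a radius and local rule that invert $\alpha$ on all patterns of $X$ and this search terminates. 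Also $Y$ is an SFT, being conjugate to the SFT $X$, so $\phi$ fits the hypotheses of Definition~\ref{def:Simulations}.

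Next I would take $m = n = 1$, so that $(X, m\Z \times n\Z) = X$ and $(B^{\Z^2}, m\Z \times n\Z) = B^{\Z^2}$ carry no superimposed grid; let $\tau : B \to B^{1 \times 1}$ be the identity substitution, which is injective and acts as the identity block map $B^{\Z^2} \to B^{\Z^2}$; and let $h = \alpha : X \to Y$, which, being a conjugacy, is in particular a bijective (hence surjective) partial block map with $\mathrm{dom}(h) = X$. The claim to check is that the square~\eqref{eq:Diagram} commutes for this data. Its leftmost arrow is the alphabet co-restriction $\psi_{\tau(B)} = \psi_{B^{\Z^2}}$, whose domain is $\psi^{-1}(B^{\Z^2}) = X$ and whose underlying function is $\psi$. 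Composing the other three arrows gives $\tau \circ \phi \circ h = \mathrm{id}_{B^{\Z^2}} \circ (\psi \circ \alpha^{-1}) \circ \alpha = \psi$, and since $\tau$ and $\phi$ are total this composite of partial functions again has domain $\mathrm{dom}(h) = X$. Hence both paths equal the partial function $\psi$ with domain $X$, so the diagram commutes; since $\tau$ is injective and $h$ is bijective, this exhibits a strong simulation, and all the data was produced effectively from the local rules of $\psi$ and $\alpha$.

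I do not expect any genuine obstacle here: the lemma is essentially the statement that strong simulation specializes to conjugacy-invariance, and the content is just unwinding Definition~\ref{def:Simulations} together with the partial-function formalism. The only point requiring a moment's care is the bookkeeping of domains in the diagram of partial functions, but this is trivial in the present case because every map in sight ($\psi$, $\phi$, $\tau$ and $h = \alpha$) is total on its stated source, so the domain of each composite along the boundary of the square is simply the source $X$ of $h$.
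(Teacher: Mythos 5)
Your proposal is correct and matches the paper's proof exactly: the paper also takes $m = n = 1$, $\tau = \mathrm{id}_B$, and $h = \alpha$, stating this in a single line. Your additional verification of the commuting square and the domain bookkeeping is just an expansion of what the paper leaves implicit.
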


\begin{proof}
  We can choose $m = n = 1$, $\tau = \mathrm{id}_B$ and $h = \alpha$ in the definition of the strong simulation.
\end{proof}

\begin{lemma}
\label{lem:Composition}
Weak, semiweak and strong simulations are transitive, and the composition of two simulations can be effectively computed from its components.
\end{lemma}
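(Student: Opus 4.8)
The plan is to combine the two sets of simulation data into one by composing substitutions and partial block maps, and to obtain the commuting square for the composite by pasting the two given squares after passing to a common blocking. Fix block maps $\psi_i : X_i \to B_i^{\Z^2}$ with $X_i$ SFTs for $i=1,2,3$, and suppose $\psi_1$ weakly simulates $\psi_2$ via data $(m_1,n_1,\tau_1,h_1)$ and $\psi_2$ weakly simulates $\psi_3$ via $(m_2,n_2,\tau_2,h_2)$. I would take $m=m_1m_2$, $n=n_1n_2$; let $\tau=\tau_1\circ\tau_2:B_3\to B_1^{m\times n}$ be the substitution obtained by applying $\tau_1$ symbolwise inside each block $\tau_2(d)$, which is again injective (a disagreement between $\tau_2(d)$ and $\tau_2(d')$ survives applying $\tau_1$) and computable from $\tau_1,\tau_2$; and let $h=h_2\circ h_1^{m_2\times n_2}$, using the identification of the $(m_2,n_2)$-blocking of the $(m_1,n_1)$-blocking of $X_1$ with the $(m_1m_2,n_1n_2)$-blocking. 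Since the blocking of a (partial) block map is a (partial) block map with an effectively computable local rule, and partial block maps are closed under composition, $h$ is a partial block map computable from $h_1,h_2$.

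Next I would check surjectivity of $h$ and identify its domain. Surjectivity: given $z\in X_3$, surjectivity of $h_2$ gives $y\in\mathrm{dom}(h_2)$ with $h_2(y)=z$, and surjectivity of $h_1$ (whose domain, as a subset of $X_1$, is unchanged by blocking) gives $x\in\mathrm{dom}(h_1^{m_2\times n_2})$ with $h_1^{m_2\times n_2}(x)=y$, so $x\in\mathrm{dom}(h)$ and $h(x)=z$. For the domain, the commutation $\tau_1\circ\psi_2\circ h_1=(\psi_1)_{\tau_1(B_2)}$ together with injectivity of $\tau_1$ gives $\psi_2\circ h_1=\tau_1^{-1}\circ(\psi_1)_{\tau_1(B_2)}$ on $\mathrm{dom}(h_1)=\psi_1^{-1}(\tau_1(B_2^{\Z^2}))$; combined with $\mathrm{dom}(h_2)=\psi_2^{-1}(\tau_2(B_3^{\Z^2}))$ this yields $\mathrm{dom}(h)=\psi_1^{-1}(\tau_1(B_2^{\Z^2}))\cap\psi_1^{-1}(\tau_1(\tau_2(B_3^{\Z^2})))=\psi_1^{-1}(\tau(B_3^{\Z^2}))$, using $\tau(B_3^{\Z^2})\subseteq\tau_1(B_2^{\Z^2})$. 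This is exactly the domain prescribed for the top arrow of the composite diagram.

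The heart of the argument is the commuting square for the composite. Applying the $(m_2,n_2)$-blocking functor to the first square (blocking preserves composition and commutes with alphabet co-restrictions) gives $\tau_1\circ\psi_2\circ h_1^{m_2\times n_2}=((\psi_1)_{\tau_1(B_2)})^{m_2\times n_2}$ over the $(m,n)$-blocking of $X_1$. Precomposing the second square $\tau_2\circ\psi_3\circ h_2=(\psi_2)_{\tau_2(B_3)}$ with $h_1^{m_2\times n_2}$, postcomposing with $\tau_1$, and using that alphabet co-restriction commutes with precomposition, I would obtain
\begin{align*}
  \tau\circ\psi_3\circ h
  &= \tau_1\circ(\psi_2)_{\tau_2(B_3)}\circ h_1^{m_2\times n_2}
   = \bigl(\tau_1\circ\psi_2\circ h_1^{m_2\times n_2}\bigr)_{\tau_2(B_3)} \\
  &= \bigl(((\psi_1)_{\tau_1(B_2)})^{m_2\times n_2}\bigr)_{\tau_2(B_3)}
   = (\psi_1)_{\tau(B_3)},
\end{align*}
where the final identification regroups the two nested blockings and co-restrictions into the single $(m,n)$-blocking of $\psi_1$ co-restricted to the macrotile alphabet $\tau(B_3)$; injectivity of $\tau_1$ is used once more, to see that $\tau_1(\tau_2(B_3^{\Z^2}))\subseteq B_1^{\Z^2}$ is precisely the subshift of valid $\tau$-macrotilings. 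This proves the weak case, with every component obtained by composing substitutions, blocking local rules, and composing local rules, hence effectively. For the semiweak case, if $h_1,h_2$ admit block map sections $\chi_1,\chi_2$, I would set $\chi=\chi_1^{m_2\times n_2}\circ\chi_2$ and compute $h\circ\chi=h_2\circ(h_1\circ\chi_1)^{m_2\times n_2}\circ\chi_2=h_2\circ\chi_2=\mathrm{id}_{X_3}$, checking in passing that the image of $\chi$ lies in $\mathrm{dom}(h)$. For the strong case, bijectivity of $h_1$ and $h_2$ (with full domains) makes $h_1^{m_2\times n_2}$ and $h_2$ bijective with full domains, hence $h$ a conjugacy.

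The step I expect to be the main obstacle is the verification of this composite square: no single identity is deep, but it requires carefully assembling several bookkeeping facts — that blocking is a functor preserving composition and commuting with alphabet co-restriction, that alphabet co-restriction commutes with precomposition, and that injective substitutions compose with image subshifts nesting as $\tau_1(\tau_2(B_3^{\Z^2}))\subseteq\tau_1(B_2^{\Z^2})$ — all while keeping careful track of the domains of partial functions, since simulations are defined via commuting diagrams of partial maps.
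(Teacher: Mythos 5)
Your proposal is correct and follows essentially the same route as the paper: compose the substitutions and the (blocked) partial factor maps, paste the two commuting squares after applying the blocking operation to the first one, and use injectivity of the outer substitution to identify the domain of the composite with $\psi_1^{-1}(\tau(B_3^{\Z^2}))$, with the semiweak and strong cases handled by composing sections and conjugacies respectively. The extra details you supply (explicit surjectivity of the composite and the image of the composed section) are routine and consistent with the paper's argument.
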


\begin{proof}
  Suppose $X \subset A^{\Z^2}, Y \subset C^{\Z^2}, Z \subset E^{\Z^2}$. Suppose $\psi : X \to B^{\Z^2}$ weakly simulates $\phi : Y \to D^{\Z^2}$ with quadruple $(m, n, \tau, h)$ and $\phi : Y \to D^{\Z^2}$ weakly simulates $\chi : Z \to F^{\Z^2}$ with quadruple $(m', n', \tau', h')$. We claim that $\psi$ weakly simulates $\chi$ with quadruple $(mm', nn', \tau \circ \tau', h' \circ h)$ where $\tau \circ \tau' : F \to B^{mm' \times nn'}$ is a composition of substitutions, and $h' \circ h : (X, mm'\Z \times nn'\Z) \nrightarrow Z$ is a composition of partial functions.

Consider the following diagram of partial functions:
\begin{equation}
  \label{eq:transitive-diag}
  \begin{tikzcd}
    (X, mm'\Z \times nn'\Z) \arrow[negate]{r}{h} \arrow[swap,negate]{d}{\psi_{\tau(\tau'(F))}} & (Y, m'\Z \times n'\Z) \arrow[negate]{r}{h'} \arrow[swap,negate]{d}{\phi_{\tau'(F)}} & Z \arrow[swap]{d}{\chi} \\
    (B^{\Z^2}, mm'\Z \times nn'\Z) & (D^{\Z^2}, m'\Z \times n'\Z) \arrow[swap]{l}{\tau} & F^{\Z^2} \arrow[swap]{l}{\tau'}
  \end{tikzcd} 
\end{equation}
The right half is the diagram for the simulation of $\chi$ by $\phi$, which commutes by assumption.
Thus, it suffices to prove that the left half commutes.

If we apply the $(m', n')$-blocking operation to the diagram for the simulation of $\phi$ by $\psi$, we obtain the following commutative diagram:
\begin{equation}
  \label{eq:transitive-diag2}
  \begin{tikzcd}
    (X, mm'\Z \times nn'\Z) \arrow[negate]{r}{h} \arrow[swap,negate]{d}{\psi_{\tau(D)}} & (Y, m'\Z \times n'\Z) \arrow[swap]{d}{\phi} \\
    (B^{\Z^2}, mm'\Z \times nn'\Z) & (D^{\Z^2}, m'\Z \times n'\Z) \arrow[swap]{l}{\tau}
  \end{tikzcd}
\end{equation}
The only difference with this diagram and the left half of \eqref{eq:transitive-diag} is the further subalphabet restriction on both vertical arrows, so it suffices to verify that the domains of $\phi_{\tau(\tau'(F))}$ and $\tau \circ \phi_{\tau'(F)} \circ h$ are equal.

Suppose $x \in X$ is in the domain of $\psi_{\tau(\tau'(F))}$.
Since $\tau(\tau'(F)) \subset \tau(D)^{m' \times n'}$, in particular $x$ is in the domain of $\psi_{\tau(D)}$, which equals the domain of $\tau \circ \phi \circ h$ by the commutativity of \eqref{eq:transitive-diag2}.
We also have $\tau \phi h(x) = \psi(x) \in \tau(\tau'(F^{\Z^2}))$.
Because $\tau$ is injective, this implies $\phi h(x) \in \tau'(F^{\Z^2})$, and hence $x$ is in the domain of $\tau \circ \phi_{\tau'(F)} \circ h$.

Conversely, if $x$ is in the domain of $\tau \circ \phi_{\tau'(F)} \circ h$, then it is in the domain of $\tau \circ \phi \circ h = \psi_{\tau(D)}$ and $\psi(x) = \tau \phi h(x) \in \tau(\tau'(F^{\Z^2}))$.

For semiweak and strong universality we simply note that the properties of having a block map section and being a conjugacy are preserved by the composition operation.
\end{proof}

Strong simulation is the ``best'' we can hope for, in a sense. Semiweak simulation improves on weak simulation -- due to the factor map we inherit all complexity from $\psi$ (things like periodicity and asymptotic Kolmogorov complexity), and due to the section \emph{some} subset of the fibers contains no extra information.

\begin{definition}
  \label{def:universality}
If a block map $\psi : X \to B^{\Z^2}$ weakly (resp.\ strongly, semiweakly) simulates every block map $\phi : Y \to D^{\Z^2}$, where $X$ and $Y$ are SFTs, then we say that the triple $(\psi, X, B^{\Z^2})$ is \emph{weakly (resp.\ strongly, semiweakly) universal}. We say it is \emph{effectively weakly (resp.\ strongly) universal} if we can additionally compute a simulation $(m, n, \tau, h)$ from the local rule of $\phi$. 
For effective semiweak universality, we require that a section can also be computed.
\end{definition}

\begin{theorem}
  \label{thm:CircuitsStronglyUniversal}
  The circuit system $(f, Z_T, T^{\Z^2})$ is effectively strongly universal.
\end{theorem}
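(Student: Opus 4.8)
The plan is to generalize the Wang-tile construction of Section~\ref{sec:JeandelRao} to an arbitrary block map, and then to verify, on top of the surjectivity already implicit there, the two extra properties---injectivity and a block-map section---that turn a weak simulation into a strong one. Fix a block map $\phi : Y \to D^{\Z^2}$ with $Y \subset C^{\Z^2}$ an SFT, given by a neighborhood and local rule for $\phi$ together with a finite list of forbidden patterns defining $Y$. Recoding $Y$ through a conjugacy and transporting the simulation with Lemmas~\ref{lem:Conjugacy} and~\ref{lem:Composition}, we may assume $Y$ is a nearest-neighbor SFT; write $W$ for a fixed finite window containing the (now nearest-neighbor) forbidden shapes and the neighborhood of $\phi$.

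For each $d \in D$ we draw an $m \times n$ pattern $\tau(d) \in T^{m \times n}$, with $m = n$ fixed large enough at the end, realizing the following combinational circuit. Horizontal and vertical ``bus'' wires carry, across macrotile borders, a fixed binary encoding of each cell's $C$-symbol to its neighbors, routed so that the macrotile at position $\vec v$ has access to the $C$-symbols of all cells in $\vec v + W$. An internal sub-circuit of $\gateO$, $\gateN$ and $\gateT$ tiles reads these encodings, checks that no forbidden pattern of $Y$ is anchored at $\vec v$, and in parallel evaluates $\phi$ on this window; a block of $\gateT$ and $\gateN$ tiles hardcodes the bits of $d$; a comparator forces the evaluated $\phi$-value to agree with this hardcoded $d$; and the conjunction of all the check bits is fed into a $\gateT$ tile, so the macrotile admits a signal assignment exactly when every check succeeds. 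Two points matter. First, $\tau$ is injective, because distinct $d$ give distinct hardcoded constant blocks. Second---and this is what distinguishes strong from merely semiweak simulation---the circuit is designed to be \emph{deterministic}: besides the bus wires carrying the $C$-encodings there are no free signals (no stray wire loops, no unforced bi-infinite wires), so in any macrotile tiling the entire satisfying assignment is a function of the $C$-symbols on the buses. Drawing such a circuit is a routine exercise with the gadgets of $T$---$\gateS$ to duplicate bits, $\gateX$ to cross buses, $\gateO$ and $\gateN$ for the logic---entirely analogous to Figure~\ref{fig:Blueprint}.

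With $\tau$ fixed, define the partial block map $h : (Z_T, m\Z \times n\Z) \nrightarrow Y$ on $f^{-1}(\tau(D^{\Z^2}))$---the macrotile tilings carrying a satisfying assignment---by reading off, at each supertile $\vec v$, the $C$-symbol encoded on the bus of $\vec v$ (formally by a total block map that defaults to a fixed symbol off valid encodings). The verification sub-circuits guarantee that the decoded configuration has no forbidden pattern, so $h$ maps into $Y$. Surjectivity and a section arrive together: for $y \in Y$, place $\tau(\phi(y)_{\vec v})$ at supertile $\vec v$, set the buses to encode $y$, and propagate deterministically; every check succeeds because $y \in Y$ and the hardcoded value $\phi(y)_{\vec v}$ equals the evaluated one, giving a point $\chi(y) \in f^{-1}(\tau(D^{\Z^2}))$ with $h(\chi(y)) = y$, and $\chi$ is clearly a block map. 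Injectivity of $h$ on its domain is where determinism is used: if $h(x) = y$ then the bus signals of $x$ are forced to encode $y$, hence every signal of $x$ is forced, and at supertile $\vec v$ the comparator forces the hardcoded $D$-symbol to be $\phi(y)_{\vec v}$, hence (by injectivity of $\tau$) the macrotile there to be $\tau(\phi(y)_{\vec v})$, so the gate configuration of $x$ is forced too. Thus $h$ is a bijection from its domain onto $Y$ with block-map inverse $\chi$, i.e.\ a conjugacy, and the commutation $f(x) = \tau(\phi(y)) = \tau(\phi(h(x)))$ is immediate, supertile $\vec v$ of $x$ hardcoding exactly $\phi(h(x))_{\vec v}$. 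Everything---$m$, $\tau$, a local rule for $h$, and the section $\chi$---is produced mechanically from the local rule of $\phi$ and the forbidden patterns of $Y$, so $(f, Z_T, T^{\Z^2})$ is effectively strongly universal.

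I expect the main obstacle to be not any single deep step but keeping the determinism invariant honest while doing the plumbing: one must ensure that the only degrees of freedom in a satisfying assignment of a macrotile tiling are the $C$-encodings on the buses, with no accidental free signal hiding in an internal loop, a dangling sub-bus, or an unforced bi-infinite wire. It is precisely this rigidity of the signals---which preimages under Game of Life lack, and which is why Game of Life is only shown to be semiweakly universal---that makes $h$ injective and hence the simulation strong rather than merely semiweak. A secondary, purely bookkeeping matter is to confirm that the domain of $h$ dictated by the commuting square, namely $f^{-1}(\tau(D^{\Z^2}))$, coincides with the set on which $h$ was defined, and that $\chi$ is a genuine two-sided inverse there.
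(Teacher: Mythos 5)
Your proof is correct and follows essentially the same route as the paper: recode the domain SFT, build macrotiles of gate tiles whose border buses synchronize symbol encodings, verify validity and the hardcoded $\phi$-value with a deterministic internal circuit so that $h$ (reading the encoding off the buses) is a bijection onto $Y$. The only cosmetic difference is that the paper recodes further so that $\phi$ becomes a symbol map and $Y$ a Wang-tile SFT whose edge colors determine the tile, which lets the macrotile check only edge-matching instead of evaluating $\phi$'s local rule over a window as you do.
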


\begin{proof}
  Consider any $\phi : Y \to D^{\Z^2}$.
  By Lemmas~\ref{lem:Conjugacy} and~\ref{lem:Composition}, we may apply a recoding to $Y$ so that $\phi$ becomes a symbol map and $Y$ becomes the SFT of tilings by a set of Wang tiles (in particular the side colors determine the tile).
  In particular, we assume $Y \subset C^{\Z^2}$ where $C$ is a set of Wang tiles.

  We now construct macrotiles that simulate the symbols in $C$. Pick large (but otherwise arbitrary) $m, n$ and for each $c \in C$, in the corresponding macrotile $\tau(c) \in T^{m \times n}$ we draw wires from each border and connect them to the bottom of a computation zone in the middle of the macrotile (this part is independent of $c$). The wires therefore synchronize a sequence of bits between adjacent macrotiles, and the sequence on each side represents the color on the corresponding side of one of the Wang tiles in $C$. In the computation zone of the macrotile corresponding to $c$, we simply check that the bit sequences are the colors of a valid Wang tile that maps to $c$ under $\phi$: it is clear that we can draw any classical digital circuit with our tiles, and such circuits are computationally universal.


  The map $h$ simply maps each macrotile to the Wang tile in $C$ coded by the colors. If we perform computations in the computation zone deterministically, $h$ will be a bijection, so this is indeed a strong simulation.
\end{proof}

The following result states that the notion of effectiveness in Definition~\ref{def:universality} is in fact trivial.
This state of affairs is not as unnatural as it may at first appear.
For example, one usually defines $\Sigma^0_1$-completeness of a $\Sigma^0_1$ language $L$ as the condition that for every $\Sigma^0_1$ language $K$, there exists a computable reduction from $K$ to $L$.
This reduction is \emph{a priori} not uniformly computable from the description of $K$, but since there does exist a $\Sigma^0_1$-complete language for which it is uniformly computable (the halting problem), the same holds for every $\Sigma^0_1$-complete language.

A similar result is also proved in \cite[Theorem 11]{DeMaOlTh11} in the context of simulations between cellular automata.
The authors define axioms for weak and strong simulation relations, and show that if a strongly universal object exists, then every weakly universal object is strongly universal.

\begin{lemma}
\label{lem:DataIsComputable}
If a block map $\psi : C^{\Z^2} \to D^{\Z^2}$ is weakly, semiweakly or strongly universal, then it is effectively so.
\end{lemma}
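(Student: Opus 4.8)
The plan is to reduce everything to the circuit system and exploit its \emph{effective} strong universality (Theorem~\ref{thm:CircuitsStronglyUniversal}) together with transitivity of simulations (Lemma~\ref{lem:Composition}). The conceptual point is that ``effectively universal'' only asks for uniformity in the simulated map $\phi$; a single simulation that cannot itself be extracted from the bare hypothesis is nevertheless allowed to be hard-wired into the algorithm, since any fixed finite string is the output of some algorithm.

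Concretely, suppose $\psi : C^{\Z^2} \to D^{\Z^2}$ is weakly universal. Since $f : Z_T \to T^{\Z^2}$ is a block map whose domain $Z_T$ is an SFT, $\psi$ weakly simulates $f$; I would fix any witnessing data $\mathcal{D}_0 = (m_0, n_0, \tau_0, h_0)$. This $\mathcal{D}_0$ is a finite object, hence there is an algorithm that outputs it (even though no such algorithm can be read off from the hypothesis of universality alone). Now, given an arbitrary block map $\phi : Y \to E^{\Z^2}$ with $Y$ an SFT, I would invoke Theorem~\ref{thm:CircuitsStronglyUniversal} to compute, from the local rule of $\phi$, simulation data $\mathcal{D}_1$ witnessing that $f$ simulates $\phi$; then Lemma~\ref{lem:Composition} lets me effectively compute from $\mathcal{D}_0$ and $\mathcal{D}_1$ data $\mathcal{D}$ witnessing that $\psi$ weakly simulates $\phi$. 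The composite map $\phi \mapsto \mathcal{D}$ is an algorithm uniform in $\phi$, which is exactly effective weak universality.

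For the semiweak case I would first observe that effective strong universality of the circuit system implies its effective semiweak universality: a strong simulation comes with a bijective $h$, and the local rule of the block map $h^{-1}$ can be found by the usual unbounded search (we know a priori that $h^{-1}$ is a block map, being the inverse of a conjugacy, so the search halts). Then the same argument applies: fix witnessing data for ``$\psi$ semiweakly simulates $f$'' together with its block-map section, compose with the computed semiweak simulation of $\phi$ by $f$, and use that Lemma~\ref{lem:Composition} both preserves ``has a block-map section'' and produces the composed section effectively. The strong case is identical, using that strong simulations compose to strong ones.

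The only delicate point — and the crux worth flagging — is the legitimacy of hard-wiring $\mathcal{D}_0$: we never claim to compute the simulation of the circuit system by $\psi$ from the hypothesis, only that one exists, and this suffices because effective universality asserts the existence of a single algorithm. Everything else is bookkeeping around Theorem~\ref{thm:CircuitsStronglyUniversal} and Lemma~\ref{lem:Composition}. (A self-contained proof avoiding the circuit system would instead enumerate candidate data $(m,n,\tau,h)$ and attempt to verify validity directly; there the obstacle is that surjectivity of the partial map $h : (X, m\Z \times n\Z) \nrightarrow Y$ onto $Y$ is not evidently semidecidable in two dimensions, which is precisely what the reduction to an already-effectively-universal system sidesteps.)
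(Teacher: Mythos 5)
Your proposal is correct and follows essentially the same route as the paper: hard-wire a fixed simulation of the circuit system by $\psi$, compute a strong simulation of the given $\phi$ by the circuit system via Theorem~\ref{thm:CircuitsStronglyUniversal}, and compose using Lemma~\ref{lem:Composition}, obtaining the section in the semiweak case as $\chi \circ (h')^{-1}$ with $(h')^{-1}$ computed by inverting the conjugacy. The point you flag as delicate --- that the fixed simulation data need only exist, not be extracted from the hypothesis --- is exactly the (implicit) content of the paper's phrase ``with a fixed simulation $(m,n,\tau,h)$.''
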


\begin{proof}
  We observe that $\psi$ weakly (resp.\ strongly, semiweakly) simulates the circuit system $(f, Z_T, T^{\Z^2})$ with a fixed simulation $(m,n,\tau,h)$.
  Given any $\phi : Y \to D^{\Z^2}$, since $(f, Z_T, T^{\Z^2})$ is effectively strongly universal by Theorem~\ref{thm:CircuitsStronglyUniversal}, it simulates $\phi$ by some simulation $(m', n', \tau', h')$ that we can compute from a description of $\phi$ and $Y$. By Lemma~\ref{lem:Composition}, $(mm', nn', \tau \circ \tau', h' \circ h)$ is a simulation of $\phi$ by $\psi$ that can be computed effectively from $(m,n,\tau,h)$ and $(m', n', \tau', h')$.

  The simulation of $\phi$ by $(f, Z_T, T^{\Z^2})$ is strong (thus it is weak and semiweak), proving the weak and strong cases. For the semiweak case, we observe that since $h'$ is bijective, its inverse is a section, and inverting an invertible block map can be done effectively.
  By assumption we can also compute a section $\chi$ for $h$. Then $\chi \circ (h')^{-1}$ is a section for $h' \circ h$.
\end{proof}

\begin{theorem}
\label{thm:GoLSemiweaklyUniversal}
The Game of Life is semiweakly universal as a block map.
\end{theorem}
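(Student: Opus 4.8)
The plan is to prove semiweak universality by exhibiting a single semiweak simulation of the circuit system $(f, Z_T, T^{\Z^2})$ by $g$, and then appealing to transitivity. Indeed, by Theorem~\ref{thm:CircuitsStronglyUniversal} the circuit system strongly --- hence in particular semiweakly --- simulates every block map $\phi : Y \to D^{\Z^2}$ with $Y$ an SFT, so once we know $g$ semiweakly simulates $f : Z_T \to T^{\Z^2}$, Lemma~\ref{lem:Composition} will give that $g$ semiweakly simulates every such $\phi$, which is exactly the assertion that $(g, \{0,1\}^{\Z^2}, \{0,1\}^{\Z^2})$ is semiweakly universal. So the whole content is to build a semiweak simulation of $f$ by $g$.

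For the data of the simulation take $m = n = 450$ and let $\tau : T \to \{0,1\}^{450 \times 450}$ be the substitution of Theorem~\ref{thm:Sub}; it is injective, since macrotiles $\tau(d)$ for distinct gate tiles $d$ differ already in which of the four sides carries a wire, or (for the two tiles with the same wire pattern, namely the three-way splitter and the OR gate) in the logic implemented in their interiors. It remains to produce the surjective partial block map $h : (\{0,1\}^{\Z^2}, 450\Z \times 450\Z) \nrightarrow Z_T$ and a block-map section $\chi$ for it. Define $h$ by the local rule: read, from the $450 \times 450$ blocking, the phase of each repeated length-$4$ wire segment crossing a macrotile boundary, convert it to a bit via the operating-phase convention of Section~\ref{sec:CompositeGadgets}, and output the element of $Z_T$ consisting of the underlying gate tiling together with these boundary signals. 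Each such bit is determined by a bounded window, so this is genuinely a block map on its domain, and by the convention allowed for specifying $h$ we let it produce arbitrary values where these windows are not ``valid'' (thus $h$ is defined everywhere, and restricts to the intended partial map). By Theorem~\ref{thm:Sub}, a blocked configuration $x$ lies in $\mathrm{dom}(g_{\tau(T)})$, i.e.\ $g(x) = \tau(z)$ for some (necessarily unique) $z \in T^{\Z^2}$, precisely when $z \in Y_T$; in that case $x$ is a $g$-preimage of $\tau(z)$, so the wire signals of $x$ form a satisfying assignment of the circuit $z$ --- this is the ``only if'' half of the analysis preceding Theorem~\ref{thm:Sub} --- giving $h(x) \in Z_T$ with $f(h(x)) = z$ and hence $\tau(f(h(x))) = \tau(z) = g(x)$. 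Thus $\mathrm{dom}(h) = \mathrm{dom}(g_{\tau(T)})$ and the square \eqref{eq:Diagram} commutes as a diagram of partial functions.

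Surjectivity of $h$ and existence of the section are both supplied by the explicit preimage construction of Section~\ref{sec:CompositeGadgets}. Given $w \in Z_T$ with underlying gate tiling $z = f(w) \in Y_T$ and signal assignment recorded in $w$, define $\chi(w)$ cellwise: the $450 \times 450$ block over a cell $v$ is the canonical preimage of the composite gadget $\tau(z_v)$ that carries exactly the boundary signals prescribed by $w$ at $v$ (reading $\bot$ as ``no wire, zeros on the boundary''), with zeros on all non-wire boundary cells and the forced pattern on the interior. Since the values recorded in $w$ agree across shared wires ($w \in Z_T$), this block depends only on $w$ in a bounded neighborhood of $v$, so $\chi : Z_T \to (\{0,1\}^{\Z^2}, 450\Z \times 450\Z)$ is a block map. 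Stitching the patches together yields $g(\chi(w)) = \tau(z)$ (again the ``if'' direction behind Theorem~\ref{thm:Sub}), so $\chi(w) \in \mathrm{dom}(h)$; and the wire signals of $\chi(w)$ are by construction exactly those of $w$, whence $h(\chi(w)) = w$. So $h \circ \chi = \mathrm{id}_{Z_T}$, $h$ is surjective, and $(450, 450, \tau, h)$ with section $\chi$ is a semiweak simulation of $f$ by $g$. With Theorem~\ref{thm:CircuitsStronglyUniversal} and Lemma~\ref{lem:Composition} this gives semiweak universality of $g$ (and, since $g$ has full-shift domain, effective semiweak universality by Lemma~\ref{lem:DataIsComputable}).

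The step that requires the most care --- though it is really just an unpacking of the formal definitions --- is checking that $h$ and $\chi$ fit the partial-block-map framework of Definition~\ref{def:Simulations}: that $\mathrm{dom}(h)$ matches $\mathrm{dom}(g_{\tau(T)})$, that $h$ and $\chi$ are block maps with bounded neighborhoods, and that the square commutes when read as a diagram of partial functions. The substantive input, namely that $g$-preimages of the macrotile circuits correspond exactly to satisfying assignments, is already contained in Theorem~\ref{thm:Sub} and the gadget analysis, so no new construction is needed here.
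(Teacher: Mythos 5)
Your proposal is correct and follows essentially the same route as the paper: exhibit a semiweak simulation of the circuit system by $g$ using the macrotiles and wire-reading map from Theorem~\ref{thm:Sub}, build the section from canonical preimages of each gate tile with prescribed boundary signals, and conclude by Theorem~\ref{thm:CircuitsStronglyUniversal} and Lemma~\ref{lem:Composition}. You are merely more explicit than the paper about the domain bookkeeping and the transitivity step (and your parenthetical on injectivity of $\tau$ overlooks that the straight wire and NOT tiles also share a boundary-wire pattern, but the injectivity claim still holds since their interiors differ).
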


\begin{proof}
The proof of Theorem~\ref{thm:Sub} actually shows that the Game of Life semiweakly simulates the circuit system $(f, Z_T, T^{\Z^2})$. Namely we showed weak universality, by constructing the macrocells $\tau(T)$ and explained the map $h$ (which interprets the bits on our Game of Life wires as bits on the abstract wires carried in $Z_T$ on top of the gate tiles).

For semiweak universality we need a section for $h$. The macrocells corresponding to gate tiles were constructed so that we can pick zero borders away from the wires. For each gate tile $t$ and each tuple of bits $(b_1, b_2, b_3, b_4)$ carried on the wires, we pick for the macrotile corresponding to $t$ an arbitrary (but always the same) preimage for the macrotile where the Game of Life wires carry signals corresponding to the $b_i$, and the section of $h$ simply performs the corresponding substitution.
\end{proof}



\section{Corollaries}
\label{sec:Corollaries}

Here we list the theorems mentioned in the introduction, and explain how they are obtained. 

\begin{theorem}
\label{thm:FinitePreimageNPC}
Given a finite-support configuration, it is NP-complete whether it admits a $g$-preimage.
\end{theorem}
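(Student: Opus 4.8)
The problem lies in $\NP$ by \cite{SaTo22}, so the plan is to establish $\NP$-hardness, and I would do this by a polynomial-time reduction from Boolean circuit satisfiability, using Theorem~\ref{thm:Sub} essentially as a black box. Given a Boolean circuit $\mathcal{C}$ with a single output, the first step is to realize $\mathcal{C}$ as a finite \emph{well-formed circuit} $c_{\mathcal{C}} \in X_T$ in the obvious planar fashion: use $\gateO$ together with $\gateN$ as a universal gate basis, $\gateS$ for fan-out, $\gateX$ for wire crossings, and $\gateH, \gateV$ together with the turn tiles $\gateEN, \gateNW, \gateWS, \gateES$ for routing; produce each primary input with a small self-contained gadget that emits an unconstrained bit, for instance a $\gateS$ tile two of whose legs are joined by a closed loop of wire tiles; and close off the output wire with a $\gateT$ tile, which forces its signal to be $1$. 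Filling the rest of the plane with the empty tile $\gateE$ (no wire of the layout reaches its boundary, so the result is genuinely well-formed), we obtain $c_{\mathcal{C}} \in X_T$ occupying a region of size polynomial in $|\mathcal{C}|$, computable from $\mathcal{C}$ in polynomial time. By construction the gate constraints of $c_{\mathcal{C}}$ admit a consistent signal assignment if and only if $\mathcal{C}$ has a satisfying input; that is, $c_{\mathcal{C}} \in Y_T$ exactly when $\mathcal{C}$ is satisfiable.

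Next I would apply the substitution $\tau : T \to \{0,1\}^{450 \times 450}$ of Theorem~\ref{thm:Sub} and put $y_{\mathcal{C}} = \tau(c_{\mathcal{C}})$. Since the empty tile carries no wires, its macrotile $\tau(\gateE)$ is the all-$0$ pattern, so $y_{\mathcal{C}}$ has finite support contained in a box of size polynomial in $|\mathcal{C}|$, and $y_{\mathcal{C}}$, in the binary encoding of finite-support configurations fixed earlier, is computable from $\mathcal{C}$ in polynomial time and has polynomial-length encoding. Theorem~\ref{thm:Sub} now gives exactly what is needed: $y_{\mathcal{C}}$ admits a $g$-preimage if and only if $c_{\mathcal{C}} \in Y_T$, which by the previous paragraph holds if and only if $\mathcal{C}$ is satisfiable. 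Hence circuit satisfiability reduces in polynomial time to deciding whether a finite-support configuration has a $g$-preimage, so the latter is $\NP$-hard, and together with membership in $\NP$ it is $\NP$-complete.

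I expect no conceptual obstacle here — all the substance is already packed into the gadget constructions behind Theorem~\ref{thm:Sub} — only routine bookkeeping: verifying that $\mathcal{C}$ can be routed planarly with the available tiles inside a polynomial-size region, that every primary input can be made genuinely free (the loop-through-$\gateS$ trick, or any other self-contained free-bit gadget), and that the background tile $\gateE$ substitutes to $0^{450 \times 450}$ so that $y_{\mathcal{C}}$ really is finite-support. One could equally well reduce from the tiling of a finite region via the blueprint of Section~\ref{sec:JeandelRao}, but reducing from circuit satisfiability is the most direct route.
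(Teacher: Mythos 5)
Your proposal is correct and follows essentially the same route as the paper: reduce from circuit/Boolean satisfiability by encoding the instance as a polynomial-size finite-support well-formed circuit $x_S \in X_T$ that lies in $Y_T$ iff the instance is satisfiable, then invoke Theorem~\ref{thm:Sub} to transfer this to a finite-support configuration $\tau(x_S)$ whose $g$-preimage existence is equivalent to satisfiability (the paper notes this direct use of Theorem~\ref{thm:Sub} suffices, and additionally re-derives the same equivalence by diagram-chasing through the semiweak simulation of Theorem~\ref{thm:GoLSemiweaklyUniversal}). Your extra bookkeeping --- the free-input loop through a splitter, terminating the output with the forced-$1$ tile, and the all-zero macrotile for the empty tile guaranteeing finite support --- correctly fills in the details the paper summarizes as ``we can easily compute.''
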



\begin{proof}
This problem is known to be in NP \cite[Theorem~1]{SaTo22}, so it suffices to prove NP-hardness.

Consider the circuit system $f : Z_T \to T^{\Z^2}$, and as usual let $X_T$ be the set of well-formed circuits and $Y_T = f(Z_T)$. From a given SAT instance $S$ we can easily compute a polynomial-size finite-support $X_T$-configuration $x_S$ which is in $Y_T$ if and only if the instance is satisfiable. Readers that skipped Section~\ref{sec:Universality} will note that the result immediately follows from the statement of Theorem~\ref{thm:Sub}.

To readers that did not skip Section~\ref{sec:Universality}, we explain how to deduce the result from semiweak universality by diagram-chasing. By Theorem~\ref{thm:GoLSemiweaklyUniversal}, there is a semiweak simulation $(m, n, \tau, h)$ of the circuit system by $g$ (indeed this is the simulation from Theorem~\ref{thm:Sub}).

We claim that $\tau(x_S)$ is in the image of $g$ if and only if the SAT instance $S$ is satisfiable. This is immediate from the simulation diagram \eqref{eq:Diagram} in this context:
\[
\begin{tikzcd}
(\{0,1\}^{\Z^2}, m\Z \times n\Z) \arrow[negate]{r}{h} \arrow[swap,negate]{d}{g_{\tau(T)}} & Z_T \arrow{d}{f} \\
(\{0,1\}^{\Z^2}, m\Z \times n\Z) & T^{\Z^2} \arrow[swap]{l}{\tau}
\end{tikzcd} 
\]
Namely, if $S$ is satisfiable, then $x_S \in T^{\Z^2}$ models a satisfiable circuit and has an $f$-preimage $z \in Z_T$. By the assumption that $h$ is surjective, we find an $h$-preimage $y \in \{0,1\}^{\Z^2}$ for $z$ such that $g(y) = \tau(x_S)$. In particular, $\tau(x_S)$ has a $g$-preimage.

Conversely, if $\tau(x_S)$ has a $g$-preimage $y \in \{0,1\}^{\Z^2}$, then by commutation we have $f(h(y)) = x_S$, showing $S$ is satisfiable.
\end{proof}

\begin{theorem}
\label{thm:Orphans}
  The set of orphans for $g$ is coNP-complete.
\end{theorem}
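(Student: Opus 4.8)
The plan is to establish both containments. Membership in coNP relies on the local characterization that a finite pattern $P$ is an orphan if and only if it has no preimage pattern, i.e.\ $g^{-1}(P)=\emptyset$ as a subset of $\{0,1\}^{D(P)+N}$ with $N=[-1,1]^2$. Indeed, given a preimage pattern $Q$ one extends it to a configuration $x$ by zeros; since the local rule at any $\vec v\in D(P)$ reads only $\sigma^{\vec v}(x)|_N=\sigma^{\vec v}(Q)|_N$, we get $g(x)|_{D(P)}=P$, so $P\sqsubset g(x)$; conversely any preimage configuration of $P$ restricts to a preimage pattern on $D(P)+N$. Hence the complement of the orphan problem, ``$P$ occurs in the image of $g$'', lies in NP with witness the polynomial-size pattern $Q$ (verifying that the local rule sends $Q$ to $P$ on $D(P)$ takes polynomial time), and the orphan problem is in coNP.

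For coNP-hardness I would show that this complement ``$P$ occurs in the image of $g$'' is NP-hard by reducing SAT to it, reusing the construction behind Theorem~\ref{thm:FinitePreimageNPC}. Given a SAT instance $S$, let $x_S\in X_T$ be the polynomial-size well-formed circuit from that proof, which equals the empty gate tile outside a finite region and satisfies $x_S\in Y_T$ if and only if $S$ is satisfiable. Enlarge the bounding box of its non-empty part by one tile on each side, so that the circuit is surrounded by a full ring of empty gate tiles, and let $P_S:=\tau(x_S)|_R$ be the restriction of the rendered configuration to the resulting rectangle $R$, where $\tau$ is the substitution of Theorem~\ref{thm:Sub}. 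Since the macrotile of the empty tile is all zeros, $P_S$ is a polynomial-size pattern: the composite-gadget circuit surrounded by a zero border that is one macrotile (hence far more than the rule radius) wide. It then suffices to prove that $P_S$ occurs in the image of $g$ if and only if $S$ is satisfiable.

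The ``if'' direction is immediate: if $S$ is satisfiable then $x_S\in Y_T$, so by Theorem~\ref{thm:Sub} the configuration $\tau(x_S)$ has a $g$-preimage $y$, and then $P_S\sqsubset\tau(x_S)=g(y)$. The ``only if'' direction is the delicate point, since Theorem~\ref{thm:Sub} concerns preimages of the whole configuration $\tau(x_S)$ whereas $P_S$ is finite; here one argues locally. If $P_S\sqsubset g(x)$, put $Q=x|_{D(P_S)+N}$, a preimage pattern of $P_S$. Each composite gadget of $x_S$ occupies a sub-rectangle $L$ in the interior of $R$, at distance much larger than $1$ from the boundary of $R$ thanks to the ring of empty tiles, so $L+N\subseteq D(P_S)$ and $Q|_{L+N}$ is a genuine preimage of that gadget's macrotile. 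By the specifications of the basic gadgets (Section~\ref{sec:BasicGadgets}) as composed into the macrotiles (Section~\ref{sec:CompositeGadgets}), in any such preimage all of the gadget's boundary wires are charged and the bits they carry satisfy the gadget's relation; since wires are shared between adjacent macrotiles, reading off these bits produces an assignment in $\{0,1\}$ to every wire of $x_S$ that respects all gate constraints, i.e.\ a witness that $x_S$ is satisfiable, hence that $S$ is. Since SAT is NP-complete, this finishes the reduction and shows the orphan problem is coNP-complete. The main obstacle is exactly this global-to-local passage: one must be sure the finite pattern $P_S$, together with its zero buffer, forces on its preimages the same charged-wire structure that the infinite tiling $\tau(x_S)$ does. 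This works precisely because every gadget correctness claim is about preimages of a single $450\times450$ macrotile and depends only on a bounded neighborhood of it, so insulating the circuit by one all-zero macrotile layer is enough.
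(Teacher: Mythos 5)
Your proof is correct, and it takes the route that the paper only gestures at in one sentence (``Alternatively, it can be proved as a finitary version of Theorem~\ref{thm:FinitePreimageNPC}''), rather than the route the paper actually spells out. The paper's primary argument is a black-box reduction from Theorem~\ref{thm:FinitePreimageNPC}: it invokes \cite[Theorem~3]{SaTo22}, which says that if a rectangular pattern padded with a thickness-4 border of dead cells is not an orphan, then the all-zero extension of that pattern to a configuration already admits a $g$-preimage; hence the padded bounding box of the finite-support configuration $\tau(x_S)$ is a non-orphan exactly when $\tau(x_S)$ has a preimage, and NP-hardness of non-orphanhood follows immediately from Theorem~\ref{thm:FinitePreimageNPC} with no further gadget analysis. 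You instead re-run the reduction from SAT at the level of finite patterns, insulating the circuit with a ring of all-zero macrotiles and re-deriving the ``only if'' direction by localizing the gadget specs to sub-patterns of the preimage. Your global-to-local step is sound --- the specs of Section~\ref{sec:BasicGadgets} are by definition properties of $g^{-1}(P)$ for the gadget rectangle $P$, so they apply verbatim to restrictions of any preimage pattern of $P_S$, and this is the same argument already used for the ``only if'' direction of Theorem~\ref{thm:Sub}. What the paper's approach buys is brevity and a much thinner buffer (4 dead cells versus a full macrotile) at the cost of an external citation; what yours buys is self-containedness relative to the gadget specifications. Your detailed coNP-membership argument (extend a preimage pattern on $D(P)+N$ by zeros) is also correct and fills in what the paper dismisses as trivial.
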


\begin{proof}
  Determining whether a finite pattern has a preimage is clearly in NP.
  As for hardness, if a rectangular pattern is padded with a thickness-4 border of dead cells and this padded pattern is not an orphan for $g$, then the extension to an infinite configuration by dead cells admits a $g$-preimage \cite[Theorem~3]{SaTo22}.
  By Theorem~\ref{thm:FinitePreimageNPC}, the latter condition is NP-hard.
  Alternatively, hardness can be proved directly as a finitary version of Theorem~\ref{thm:FinitePreimageNPC}.
\end{proof}

\begin{theorem}
\label{thm:PreimageOfPeriodic}
Given a subshift of finite type $Y$, we can effectively compute $m, n \in \N$ and an $m \times n$-periodic configuration $x \in \{0,1\}^{\Z^2}$ such that the system $(g^{-1}(x), m\Z \times n\Z)$ admits $Y$ as a split factor.
\end{theorem}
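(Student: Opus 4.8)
The plan is to carry out, at the abstract level of the circuit system, the same construction that Section~\ref{sec:JeandelRao} performs for the Jeandel--Rao set, and then to transport it to Game of Life via semiweak universality together with a short diagram chase.

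First I would recode $Y$. As in the proof of Theorem~\ref{thm:CircuitsStronglyUniversal}, every SFT $Y$ is conjugate, through an effectively computable conjugacy $\alpha : X_{T'} \to Y$, to the SFT $X_{T'}$ of tilings by a finite Wang tile set $T' \subseteq (C')^4$ whose tiles are determined by their four side colours. A conjugacy is a surjective block map with a block-map section, and split factor maps compose, so it suffices to realise $X_{T'}$ as a split factor of $(g^{-1}(x), m\Z \times n\Z)$ for a suitable periodic $x$. Next, encoding each colour of $C'$ by $\lceil \log_2 |C'| \rceil$ bits and the predicate ``the four colour-bundles around a cell encode a valid tile of $T'$'' as a digital circuit (a disjunctive normal form with $|T'|$ clauses, built from the gates in $T$), I would lay this out as a single $p \times q$-periodic well-formed circuit $C \in X_T$, routing the colour-bundles so that the east bundle of each period-cell coincides with the west bundle of its eastern neighbour, and likewise vertically, and hard-wiring the output of the DNF to $1$. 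Since the circuit is an acyclic DNF evaluator, every satisfying assignment of $C$ is determined by its values on the colour-bundles, and these range exactly over the legal colourings; hence $(f^{-1}(C), p\Z \times q\Z)$ --- the set of satisfying assignments of the periodic circuit, with the constant circuit coordinate suppressed --- is (effectively) conjugate to $X_{T'}$.

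Now I would invoke semiweak universality. By Theorem~\ref{thm:GoLSemiweaklyUniversal}, Game of Life semiweakly simulates the circuit system $(f, Z_T, T^{\Z^2})$, and by Lemma~\ref{lem:DataIsComputable} the data $(m_0, n_0, \tau_0, h_0)$ and a section $\chi_0$ of $h_0$ are computable. Set $x := \tau_0(C)$, periodic with computable periods $m := m_0 p$ and $n := n_0 q$. A diagram chase through the commutativity $\tau_0 \circ f \circ h_0 = g_{\tau_0(T)}$ finishes the argument: if $y \in g^{-1}(x)$ then $g(y) = \tau_0(C) \in \tau_0(T^{\Z^2})$, so $y$ lies in the domain of $h_0$, and injectivity of $\tau_0$ forces $f(h_0(y)) = C$, i.e.\ $h_0(y) \in f^{-1}(C)$; conversely, for $z \in f^{-1}(C)$ one has $g(\chi_0(z)) = \tau_0(f(z)) = \tau_0(C) = x$, so $\chi_0(z) \in g^{-1}(x)$ and $h_0(\chi_0(z)) = z$. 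Thus $h_0$ restricts to a surjective block map $(g^{-1}(x), m\Z \times n\Z) \to (f^{-1}(C), p\Z \times q\Z)$ with section $\chi_0|_{f^{-1}(C)}$; composing with the conjugacies $(f^{-1}(C), p\Z \times q\Z) \cong X_{T'}$ and $\alpha : X_{T'} \cong Y$ exhibits $Y$ as a split factor of $(g^{-1}(x), m\Z \times n\Z)$, and every step is effective.

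The step that needs the most care is the middle one: building the periodic circuit $C$ so that its satisfying assignments correspond \emph{faithfully} to $X_{T'}$, with no spurious solutions --- no free internal wire beyond the colour-bundles and no way to satisfy the formula without actually matching a tile. This is where acyclicity of the evaluator and the hard-wiring of the output to $1$ are used, and it is the point at which the correspondence with $X_{T'}$ is really established; it is routine circuit engineering, essentially the content of Figure~\ref{fig:Blueprint} and the discussion around it, but it is the crux. The remaining work --- the recoding of $Y$ to Wang tiles, the bookkeeping relating the $m_0 \times n_0$ and $p \times q$ blockings, and the observation that the alphabet co-restriction makes $h_0$ \emph{total} on all of $g^{-1}(x)$ --- is straightforward given the machinery of Section~\ref{sec:Universality}.
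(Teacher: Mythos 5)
Your proposal is correct, and at bottom it runs on the same engine as the paper's proof --- semiweak simulation of the circuit system by $g$, composed with a circuit-level encoding of $Y$ --- but you unfold the abstraction where the paper invokes it wholesale. The paper's proof is three lines: take the block map $\phi : Y \to \{0\}^{\Z^2}$ to the singleton full shift, apply \emph{effective semiweak universality} of $g$ (Theorem~\ref{thm:GoLSemiweaklyUniversal} plus Lemma~\ref{lem:DataIsComputable}) to get a simulation $(m,n,\tau,h)$, and observe that since the codomain alphabet is a single symbol, $x = \tau(0^{\Z^2})$ is automatically $m \times n$-periodic and the domain of $h$ is exactly $g^{-1}(x)$; surjectivity of $h$ onto $Y$ plus the section give the split factor immediately. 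Your middle step --- recoding $Y$ to Wang tiles and building the periodic DNF-evaluating circuit $C$ with $f^{-1}(C)$ conjugate to $X_{T'}$ --- is precisely a hand re-derivation of the strong simulation of $\phi : Y \to \{0\}^{\Z^2}$ by the circuit system (the special case of Theorem~\ref{thm:CircuitsStronglyUniversal} where the codomain is a singleton, so that the macrotile image $\tau'(0^{\Z^2})$ is one periodic circuit), and your final diagram chase re-proves the relevant instance of transitivity (Lemma~\ref{lem:Composition}). What your version buys is concreteness: it makes explicit how $x$ is actually computed and why its periods are $m_0 p \times n_0 q$, which is essentially the construction of Section~\ref{sec:JeandelRao}; what the paper's version buys is brevity, since all the work has been packaged into the universality lemmas. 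You correctly flag the one point needing care (no spurious satisfying assignments of $C$ beyond those determined by the colour bundles), and your resolution --- acyclicity of the evaluator, determinism of each gate's output, and hard-wiring the output to $1$ --- matches the paper's argument that deterministic computation in the computation zone makes $h'$ a bijection. No gap.
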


\begin{proof}
Consider the block map $\phi : Y \to \{0\}^{\Z^2}$ from $Y$ to a singleton subshift. 
By effective semiweak universality of $g$, we can compute a semiweak simulation $(m, n, \tau, h)$ of $\phi$ by $g$.
The substitution image $x = \tau(0^{\Z^2})$ is $m \times n$-periodic.

Consider now the diagram of the simulation:
\[
\begin{tikzcd}
(\{0,1\}^{\Z^2}, m\Z \times n\Z) \arrow[negate]{r}{h} \arrow[swap,negate]{d}{g_{\tau(0)}} & Y \arrow{d}{\phi} \\
(\{0,1\}^{\Z^2}, m\Z \times n\Z) & \{0\}^{\Z^2} \arrow[swap]{l}{\tau}
\end{tikzcd} 
\]
The domain of $g_{\tau(0)}$ is exactly $g^{-1}(x)$, and it also equals the domain of $h$.
Since the simulation is semiweak, $h$ admits a block map section $\chi : Y \to (g^{-1}(x), m\Z \times n\Z)$, which we can also effectively compute.
Hence $Y$ is a split factor of $(g^{-1}(x), m\Z \times n\Z)$, as required.
\end{proof}

The following theorems are immediate corollaries. 

\begin{theorem}
\label{thm:PeriodicPreimageUndecidable}
  It is undecidable (specifically, $\Pi^0_1$-complete) whether a given totally periodic configuration $x \in \{0,1\}^{\Z^2}$ has a $g$-preimage.
\end{theorem}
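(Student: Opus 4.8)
The plan is to reduce the standard undecidable tiling problem to preimage existence for periodic configurations, using the effective constructions already established. Specifically, I would start from the fact that it is $\Pi^0_1$-complete whether a given Wang tile set admits a tiling of the plane (the domino problem), and even whether it admits a \emph{periodic} tiling is undecidable; more to the point, I want the version where we are told the tile set is either (a) admits a periodic tiling or (b) admits no tiling at all, and these cases are recursively inseparable — but actually for $\Pi^0_1$-completeness it suffices to reduce from "does this tile set tile the plane", which is $\Pi^0_1$-complete by Berger/Robinson.

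\textbf{Upper bound.} First I would argue membership in $\Pi^0_1$. Given a totally periodic $x$ with periods $p \times q$, the preimages of $x$ under $g$ form a $\Z^2$-SFT (intersect the full shift with the finitely many constraints coming from $g$ and from the periodicity of $x$), and the question "does this SFT contain a configuration that is consistent on all of $\Z^2$" is the emptiness problem for SFTs, which is $\Pi^0_1$: we can recursively enumerate, for each $N$, all ways to legally fill an $N \times N$ window, and $x$ has no preimage iff for some $N$ this set is empty (by compactness / König's lemma). Hence the problem is co-r.e., i.e.\ in $\Pi^0_1$.

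\textbf{Hardness.} For $\Pi^0_1$-hardness I would take an arbitrary Wang tile set $W$, build the SFT $Y = X_W \subset W^{\Z^2}$ of its tilings, and apply Theorem~\ref{thm:GoLSemiweaklyUniversal} (together with Lemma~\ref{lem:DataIsComputable} / effective semiweak universality) to the block map $\phi : Y \to \{0\}^{\Z^2}$. This is exactly the situation of Theorem~\ref{thm:PreimageOfPeriodic}: we effectively compute $m, n$ and an $m \times n$-periodic configuration $x = \tau(0^{\Z^2})$ such that the domain of $h$ — which by the simulation diagram equals $g^{-1}(x)$ — is nonempty if and only if $Y$ is nonempty, i.e.\ if and only if $W$ tiles the plane. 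Indeed, if $W$ tiles the plane, pick a tiling $y \in Y$, and the section $\chi(y)$ is a $g$-preimage of $x$; conversely if $g^{-1}(x) \ni w$ then $h(w) \in Y$, so $Y \neq \emptyset$. Since $W \mapsto x$ is computable and "$W$ tiles the plane" is $\Pi^0_1$-complete, the preimage problem for totally periodic configurations is $\Pi^0_1$-hard, hence $\Pi^0_1$-complete.

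\textbf{The main obstacle} is essentially bookkeeping rather than a genuine difficulty: one must make sure the reduction really produces a \emph{totally periodic} configuration (not merely a configuration with a large orbit), which is immediate since $\tau(0^{\Z^2})$ is $m \times n$-periodic by construction, and one must confirm that "emptiness of $g^{-1}(x)$" is genuinely $\Pi^0_1$ and not lower — this is standard, but worth stating carefully, since the whole point is that there is no algorithm. It is perhaps worth remarking that this statement is the one place we genuinely use semiweak (rather than merely weak) universality is \emph{not} needed here — weak universality already gives surjectivity of $h$ on $\phi^{-1}(\tau(\{0\}^{\Z^2}))$, which is all that is required for the hardness direction — so the proof can be phrased using just Theorem~\ref{thm:PreimageOfPeriodic} as a black box.
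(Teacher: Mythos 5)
Your proposal is correct and follows essentially the same route as the paper: the paper's proof also invokes Theorem~\ref{thm:PreimageOfPeriodic} to turn an arbitrary SFT $Y$ into a totally periodic $x$ with $g^{-1}(x) \neq \emptyset$ iff $Y \neq \emptyset$, and then cites Berger's theorem for undecidability of SFT emptiness. Your additional remarks — the explicit compactness argument for $\Pi^0_1$ membership, and the observation that weak universality (surjectivity of $h$) already suffices here — are both correct and consistent with the paper, which makes the latter point in its introduction.
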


\begin{proof}
  Given an SFT $Y$, construct the configuration $x \in \{0,1\}^{\Z^2}$ as in Theorem~\ref{thm:PreimageOfPeriodic}.
  Now $x$ has a $g$-preimage if and only if $Y$ is nonempty.
  The latter was proved to be undecidable by Berger \cite{Be66}.
\end{proof}

\begin{theorem}
  \label{thm:PreimagePeriodicUndecidable}
  Given a totally periodic configuration $x \in \{0,1\}^{\Z^2}$ that has a preimage under $g$, it is undecidable whether it has a totally periodic $g$-preimage.
  Given a totally periodic $x$, it is $\Sigma^0_1$-complete whether it has a totally periodic $g$-preimage.
\end{theorem}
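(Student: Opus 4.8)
The plan is to combine Theorem~\ref{thm:PreimageOfPeriodic} with the classical fact that detecting totally periodic configurations in two-dimensional SFTs is $\Sigma^0_1$-complete, even among nonempty SFTs.

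First I would dispatch the upper bound. Given a totally periodic $x$, one may compute a horizontal period $m$ and a vertical period $n$ for it. The property ``$x$ has a totally periodic $g$-preimage'' is then semidecidable: enumerate all pairs $(p,q) \in \N^2$ and all patterns $P \in \{0,1\}^{p \times q}$, form the $p \times q$-periodic configuration $z_P$ with fundamental domain $P$, and test whether $g(z_P) = x$; since both sides are periodic, this equality is decidable, as it can be checked on a single $\lcm(p,m) \times \lcm(q,n)$ window. Accept if some $z_P$ works. Hence the problem is in $\Sigma^0_1$, in both the restricted and the unrestricted form.

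For hardness, start from a two-dimensional SFT $Y$ and apply Theorem~\ref{thm:PreimageOfPeriodic} to obtain $m, n \in \N$ and an $m \times n$-periodic configuration $x$ such that $Y$ is a split factor of $(g^{-1}(x), m\Z \times n\Z)$; write $h : (g^{-1}(x), m\Z \times n\Z) \nrightarrow Y$ for the factor map (whose domain is all of $g^{-1}(x)$) and $\chi : Y \to (g^{-1}(x), m\Z \times n\Z)$ for its block-map section. I claim $x$ admits a totally periodic $g$-preimage if and only if $Y$ contains a totally periodic configuration. Indeed, if $y \in Y$ is $\sigma^{(p,0)}$- and $\sigma^{(0,q)}$-invariant, then $\chi(y) \in g^{-1}(x)$ is invariant under the blocked action of $(p,0)$ and $(0,q)$, that is, under $\sigma^{(mp,0)}$ and $\sigma^{(0,nq)}$, so it is a totally periodic $g$-preimage of $x$. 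Conversely, if $z \in g^{-1}(x)$ is totally periodic, say $\sigma^{(a,0)}(z) = z = \sigma^{(0,b)}(z)$, then $z$ is fixed by the blocked action of $(a,0)$ and $(0,b)$, so $h(z) \in Y$ is $\sigma^{(a,0)}$- and $\sigma^{(0,b)}$-invariant, hence totally periodic.

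It remains to plug in the right classical input. It is standard that there is a computable map $M \mapsto Y_M$ from Turing machines to two-dimensional SFTs such that each $Y_M$ is nonempty and $Y_M$ contains a totally periodic configuration precisely when $M$ halts on empty input: one obtains this, for instance, by taking the disjoint union (over disjoint alphabets) of a fixed nonempty aperiodic SFT, such as the one given by the Jeandel--Rao tiles \cite{JeRa21}, with a standard SFT simulating $M$ whose nonemptiness encodes halting and which, when nonempty, contains a periodic configuration (cf.\ the constructions behind \cite{Be66}). Feeding $Y = Y_M$ into the previous paragraph, the periodic configuration $x = x_M$ produced by Theorem~\ref{thm:PreimageOfPeriodic} always has a $g$-preimage (as $Y_M \neq \emptyset$), and $x_M$ has a totally periodic $g$-preimage iff $M$ halts. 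This gives a computable many-one reduction from the halting problem, proving $\Sigma^0_1$-hardness even when restricted to periodic $x$ known to have a $g$-preimage; together with the upper bound this yields the $\Sigma^0_1$-completeness of the general problem and the undecidability of the restricted one. The only real subtlety is bookkeeping --- making sure total periodicity transports correctly through the block maps of the blocked systems (a horizontal period $p$ upstairs becomes $mp$ downstairs, and dually for $h$), and arranging that the configuration fed to part one genuinely lies in the hypothesis class, which is exactly why we insist on $Y_M$ being nonempty. No step presents an obstacle beyond invoking the classical periodic-point undecidability correctly.
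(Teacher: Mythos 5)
Your overall architecture is the same as the paper's: invoke Theorem~\ref{thm:PreimageOfPeriodic}, transport total periodicity in both directions (the section $\chi$ sends a totally periodic point of $Y$ to a totally periodic preimage of $x$, and the factor map $h$ sends a totally periodic preimage of $x$ to a totally periodic point of $Y$), and then reduce from the problem of deciding whether a nonempty SFT contains a totally periodic point. The upper bound argument and the transport bookkeeping are fine.

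The one step that would fail is your proposed construction of the family $Y_M$. You describe taking a disjoint union of an aperiodic SFT with ``a standard SFT simulating $M$ whose nonemptiness encodes halting and which, when nonempty, contains a periodic configuration.'' No such second component exists: nonemptiness of SFTs is a $\Pi^0_1$ property (by compactness, emptiness is witnessed by a finite non-tileable region), so a computable family of SFTs that are nonempty exactly when $M$ halts would place the halting problem in $\Pi^0_1$, contradicting its $\Sigma^0_1$-completeness. The construction that actually works is more delicate --- one needs an SFT that is always nonempty, whose configurations are forced to simulate $M$ on ever larger regions so that a halting computation can be ``closed off'' into a periodic configuration while a non-halting one forbids all periodic points --- and this is precisely the content of \cite[Theorem~5]{Je10}, which the paper simply cites. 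So the fact you need is true and your reduction goes through once you replace your ad hoc construction with that citation, but as written the hardness half rests on an impossible gadget.
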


\begin{proof}
  Given an SFT $Y$, construct the configuration $x \in \{0,1\}^{\Z^2}$ as in Theorem~\ref{thm:PreimageOfPeriodic}.
  If $Y$ is nonempty, then the corresponding periodic configuration $x$ has a preimage.
  Furthermore, if $Y$ has a periodic point, then by semiweak universality so does $x$ (by applying the section).
  Conversely, any periodic preimage of $x$ maps to a periodic point of $Y$.
  It follows that $Y$ has a periodic point if and only if $x$ has a periodic preimage.
  Given a nonempty SFT, it is $\Sigma^0_1$-hard, in particular undecidable, whether it contains a totally periodic point \cite[Theorem~5]{Je10}, proving the first claim.

  The second claim follows from the same $\Sigma^0_1$-hardness result, and the observation that the problem is $\Sigma^0_1$ since we simply need to exhibit a totally periodic preimage to prove that one exists.
\end{proof}

We note that Theorem~\ref{thm:PreimagePeriodicUndecidable} does not follow directly from weak universality, since we cannot prove the existence of a totally periodic preimage without the section.
More concretely, take any aperiodic SFT $X$ and consider the block map $h : X \times \{0,1\}^{\Z^2} \to \{0,1\}^{\Z^2}$ defined by $h(x, y) = g(y)$.
Then $h$ is weakly universal, but no configuration has a totally periodic $h$-preimage, so the problems of Theorem~\ref{thm:PreimagePeriodicUndecidable} are trivially decidable.

\begin{theorem}
\label{thm:ArecursiveOnly}
  There exists a totally periodic configuration $x \in \{0,1\}^{\Z^2}$ that has a $g$-preimage but no computable $g$-preimage.
\end{theorem}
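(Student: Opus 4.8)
The plan is to feed Theorem~\ref{thm:PreimageOfPeriodic} with an SFT that is as computationally pathological as possible. Recall the classical results of Hanf and Myers: there is a nonempty two-dimensional subshift of finite type $Y_0$ all of whose configurations are non-computable (Hanf produced a non-SFT closed shift-invariant set with no recursive point, and Myers refined this to a genuine Wang tile set, hence an SFT). I would take $Y = Y_0$ and apply Theorem~\ref{thm:PreimageOfPeriodic}: this yields $m, n \in \N$, an $m \times n$-periodic configuration $x \in \{0,1\}^{\Z^2}$, a factor map $h : (g^{-1}(x), m\Z \times n\Z) \to Y_0$, and a block map section $\chi : Y_0 \to (g^{-1}(x), m\Z \times n\Z)$ with $h \circ \chi = \mathrm{id}_{Y_0}$.

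First I would observe that $x$ has a $g$-preimage: since $h$ is surjective (being a factor map) and $Y_0 \neq \emptyset$, its domain $g^{-1}(x)$ is nonempty; concretely, for any $w \in Y_0$ the configuration $\chi(w)$ lies in $g^{-1}(x)$. Next, suppose toward a contradiction that $x$ admits a \emph{computable} $g$-preimage $y$. A block map is computable in the sense that the value of the image at any single cell is obtained by applying the finite local rule to a bounded window of the argument; this remains true after passing to an $(m,n)$-blocking, since a cell of the blocked system still only reads a bounded window of $y$. Hence $h(y)$ is a computable configuration, and $h(y) \in Y_0$, contradicting the defining property of $Y_0$. Therefore $x$ has no computable $g$-preimage.

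There is essentially no serious obstacle here beyond correctly importing the Hanf--Myers SFT; all the real work is already packaged in Theorem~\ref{thm:PreimageOfPeriodic} (itself a consequence of effective semiweak universality of $g$). The only point that genuinely needs care is the remark above that block maps -- including their $(m,n)$-blocked versions -- preserve computability of individual configurations, which is what transports the non-computability of $Y_0$ back to $g^{-1}(x)$. I would state this as a one-line lemma or simply inline it, as it is routine.
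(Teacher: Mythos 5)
Your proposal is correct and is essentially identical to the paper's proof: both invoke the Myers SFT with no computable configurations, apply Theorem~\ref{thm:PreimageOfPeriodic}, and note that a computable $g$-preimage of $x$ would push forward under the block map $h$ to a computable point of $Y$. The only difference is that you spell out the routine facts (nonemptiness of $g^{-1}(x)$ via the section, and that block maps preserve computability of configurations) which the paper leaves implicit.
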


\begin{proof}
  There exists an SFT $Y$ with no computable configurations \cite{My74}.
  Construct the configuration $x$ as in Theorem~\ref{thm:PreimageOfPeriodic}, and observe that $x$ cannot have a computable $g$-preimage, since its $h$-image would be a computable configuration in $Y$.
\end{proof}

\begin{theorem}
\label{thm:Kolmogorov}
There exists a totally periodic configuration that has a $g$-preimage, but every preimage has $\Omega(n)$ Kolmogorov complexity in every $n \times n$-pattern.
\end{theorem}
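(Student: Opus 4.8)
The plan is to run the same template as Theorems~\ref{thm:ArecursiveOnly}--\ref{thm:PreimagePeriodicUndecidable}, feeding Theorem~\ref{thm:PreimageOfPeriodic} with an SFT all of whose configurations have linear pattern complexity. Such SFTs exist: by the construction of Durand, Levin and Shen \cite{DuLeSh} there is an SFT $Y \subseteq C^{\Z^2}$ and a constant $\alpha > 0$ such that for every $z \in Y$, every $k \in \N$, and every $k \times k$ square $R \subset \Z^2$, the Kolmogorov complexity of $z|_R$ is at least $\alpha k$. (Such a $Y$ is in particular nonempty.) I would apply Theorem~\ref{thm:PreimageOfPeriodic} to this $Y$, obtaining $m, n \in \N$ and an $m \times n$-periodic configuration $x \in \{0,1\}^{\Z^2}$ together with a surjective block map $h : (g^{-1}(x), m\Z \times n\Z) \to Y$ (indeed $Y$ is a split factor). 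Since $Y \neq \emptyset$, surjectivity of $h$ forces $g^{-1}(x) \neq \emptyset$, so $x$ has a $g$-preimage; the section is not needed for this corollary.

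It remains to propagate the complexity bound of $Y$ to the preimages of $x$, and here the only tool needed is that a block map does not increase Kolmogorov complexity by more than a bounded factor, up to a logarithmic additive term for recording sizes. Fix any $y \in g^{-1}(x)$ and set $z = h(y) \in Y$. Viewing $h$ as a block map on the $(m,n)$-blocked full shift, let $N_h$ be its neighborhood and let $D$ bound the coordinates of $N_h$ in absolute value. From an $M \times M$ pattern $y|_R$ one can compute, using the (constant-size) local rule of $h$ together with the value of $M$, a $k \times k$ pattern $z|_{R'}$ with $k \geq M/\max(m,n) - c_0$ for a constant $c_0$ depending only on $m, n, D$: each symbol of $z$ is a function of the $m \times n$ blocks of $y$ at the finitely many coordinates of $N_h$, so the "safe" region of $z$ is a rectangle of side $\Omega(M)$ that contains a square $R'$ of side $k = \Omega(M)$. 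Hence $C(y|_R) \geq C(z|_{R'}) - O(\log M) \geq \alpha k - O(\log M)$, which is $\Omega(M)$ once $M$ is large. As $y$ and $R$ were arbitrary, every $g$-preimage of $x$ has $\Omega(n)$ Kolmogorov complexity in every $n \times n$ pattern, uniformly, which is the assertion.

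I expect the only point requiring care to be the routine "block maps almost preserve Kolmogorov complexity" estimate: one must line up the blocking factor $\max(m,n)$, the radius of $h$, the fact that the complexity bound on $Y$ is asymptotic in $k$, and the $O(\log)$ overheads for specifying sizes and offsets. None of these affects the order of growth, only the implied constant, which comes out roughly as $\alpha/\max(m,n)$. As with the other corollaries in this section, there is no real obstacle beyond this bookkeeping: the content is inherited wholesale from tiling theory through the semiweak simulation provided by Theorem~\ref{thm:PreimageOfPeriodic}.
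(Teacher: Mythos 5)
Your proposal is correct and follows exactly the paper's route: the paper's proof of Theorem~\ref{thm:Kolmogorov} is a one-line reference to the same template as the preceding corollaries, instantiated with the Durand--Levin--Shen SFT of maximal Kolmogorov complexity, combined with Theorem~\ref{thm:PreimageOfPeriodic}. The "block maps almost preserve Kolmogorov complexity" bookkeeping you spell out is precisely the content the paper leaves implicit, and your handling of it is sound.
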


\begin{proof}
  The proof is similar to the ones above, but uses the existence of SFTs with maximal Kolmogorov complexity in all patterns \cite{DuLeSh08}.
\end{proof}

Of course, the Kolmogorov complexity of the $g$-preimages is much lower than that of the factor -- the hidden constant in the $\Omega(n)$ is very small.




%

\section{Implementation and verification of SAT-based searches}
\label{sec:Implementation}

We constructed the gadgets of Section~\ref{sec:BasicGadgets} using two different SAT-based search programs and some manual work.
Both programs are based on the following idea.
Fix a finite set $D \subset \Z^2$ of cells and a set $\mathcal{F} \subset \{0,1\}^D$ of possible ``forced preimages''.
Our goal is to produce a single finite pattern $P$ such that the set $g^{-1}(P)|_D = \{ Q|_D \mid Q \in g^{-1}(P) \}$ of shape-$D$ subpatterns of its preimages is exactly $\mathcal{F}$.
We see this as an optimization problem: there is a hard constraint $\mathcal{F} \subset g^{-1}(P)|_D$, and the goal is to minimize the cardinality of $g^{-1}(P)|_D \setminus \mathcal{F}$, hopefully to zero.
We also need to impose additional hard constraints, such as $P$ containing a wire at a certain position and 0-cells on the rest of its border, or relaxations, like only considering preimages that have a valid ``incoming'' signal on top of a wire of $P$.

The first algorithm is essentially a backtracking hill-climbing optimizer, and the second one is a genetic algorithm.
Both programs are written in Python and use the PySat library \cite{pysat} to invoke the Glucose 4.1 SAT solver \cite{glucose41}.
The source code for the hill-climber is available at \cite{hill-climber}.

\subsection{The hill-climber}

The hill-climber program constructs the pattern $P$ iteratively, starting from the empty pattern and adding one or more specified cells at a time.
It expects the following parameters:
\begin{itemize}
\item
  A finite set $\{D_1, \ldots, D_n\}$ of nonempty finite subsets of $\Z^2$.
  Their union $D = \bigcup_{i=1}^n D_i$ is the domain of the forced preimages.
\item
  A finite set $\{q_1, \ldots, q_k\}$ of finite or infinite patterns, given as functions $q_i : \Z^2 \to \{0,1,\bot\}$ with $\bot$ meaning an unspecified cell.
\item
  For each $i \in \{1, \ldots, k\}$, a set $\mathcal{F}_i \subset \{0,1\}^D$ of forced patterns.
  The semantics is that restricting to $D$ those preimages that are compatible with $q_i$ should produce exactly the set $\mathcal{F}_i$.
\item
  A finite or infinite pattern, again given as a function $p : \Z^2 \to \{0,1,\bot\}$.
  It represents a constraint on $P$: if $p(\vec v) = b \in \{0,1\}$, then we must have $P_{\vec v} = b$.
\item
  Several technical parameters that guide the search process.
\end{itemize}

The program proceeds in rounds.
On each round, we enumerate the outer border $\partial P$ of the current candidate pattern $P$, which consists of those cells that are not in the domain of $P$ but have an immediate neighbor (orthogonal or diagonal) that is.
For each $\vec v \in \partial P$ and $b \in \{0,1\}$, we extend $P$ into a new pattern $P'$ by specifying $P'_{\vec v} = b$, and compute its \emph{score} $s(P')$ as defined below.
The extension with the lowest score will replace $P$.
If no extension has a strictly lower score than $P$, then we enumerate all extensions by two adjacent cells on the border, then three, and so on up to some bound.
If no better extension is still found, we repeatedly try to extend $P$ by randomly specifying the values of a randomly chosen rectangle of cells near the border of $P$.
After some number of tries, we give up and backtrack, choosing the next best extension of the previous round instead.
The program keeps track of the entire history of the search, so it can backtrack as far as needed.

The score of a candidate pattern $P$ is defined as
\[
  s(P) = \sum_{i=1}^k \sum_{j=1}^n \frac{1}{|D_j|} \log \left( 1 + \sum_{R \in \mathcal{Q}_{(i,j)}} \frac{M(i,j,R)}{|D_j|} \right)
\]
if it is \emph{valid} (see below), and $s(P) = \infty$ if it is invalid, with lower score being better.
Here, $\mathcal{Q}_{(i,j)}$ is the set of patterns $R = Q|_{D_j}$ where $Q \in g^{-1}(P)$ is a preimage that agrees with $q_i$, minus those in $(\mathcal{F}_i)|_{D_j}$, and
\[
  M(i,j,R) = 1 + \max_{Q \in \mathcal{F}_i} |\{ \vec v \in D_j \mid R_{\vec v} = Q_{\vec v} \}|.
\]
is one plus the largest number of cells on which $R$ agrees with some pattern of $\mathcal{F}_i$.

The high-level idea of the score function is the following.
For each $i \in \{1, \ldots, k\}$, $j \in \{1, \ldots, n\}$ and $R \in \{0,1\}^{D_j} \setminus \mathcal{F}_i$, we want to encourage the program to extend the pattern $P$ in such a way that no preimage $Q \in g^{-1}(P)$ that is compatible with $q_i$ contains the pattern $R$ at $D_j$.
Each pattern $R$ that does occur in such a preimage incurs a penalty proportional to $M(i,j,R) / |D_j| > 0$ inside the logarithm, so the program is incentivized to extend $P$ in such a way that these patterns do not occur in its preimages.
Furthermore, the program is encouraged to first get rid of patterns $R$ with higher $M(i,j,R)$, that is, those that are similar to some pattern in $\mathcal{F}_i$ (which we do not want to forbid).
Our intuition is that such patterns $R$ are easier to handle when the pattern $P$ is still small and newly added cells are close to the domain $D_j$.

Notice that since we have covered the domain $D$ by the subsets $D_1, \ldots, D_k$ that are scored separately, having a score of $0$ does not guarantee the property that restricting $q_i$-compatible preimages of $P$ to $D$ gives exactly $\mathcal{F}_i$, unless $k = 1$.
The reason for the cover is efficiency: the size of each $\mathcal{Q}_{(i,j)}$ is initially exponential in that of $D_j$.
However, as $P$ grows, the subsets $\mathcal{Q}_{(i,j)}$ shrink, and when two of them become small enough, the program replaces the respective sets $D_{j(1)}$ and $D_{j(2)}$ with their union, thus decreasing $k$.

Let us now discuss the validity of patterns and the ways to verify it.
There are three reasons why a pattern $P$ can be invalid.
The first is that for some $\vec v \in \Z^2$ we have $p(\vec v) = b \in \{0,1\}$ and $P_{\vec v} = 1-b$.
This can be verified immediately after extending $P$ by the cell $\vec v$.
The second is that for some $i \in \{1, \ldots, k\}$ and $Q \in \mathcal{F}_i$, there is no preimage $R \in g^{-1}(P)$ that is compatible with $q_i$ and satisfies $R|_D = Q|_D$.
This can be verified with a single call to a SAT-solver: it is simple to construct a SAT instance that is unsatisfiable if and only if the above property holds.

The third reason is more subtle.
Consider the situation that for some $i \in \{1, \ldots, n\}$ there are two preimages $Q, R \in g^{-1}(P)$ which are compatible with $q_i$, such that $Q|_D \in \mathcal{F}_i$, $R|_D \notin \mathcal{F}_i$, and $Q_{\vec v} = R_{\vec v}$ for every cell $\vec v$ near the border of $P$.
We call this pair of preimages a \emph{diamond}.
Then for every extension $P'$ of $P$, either both $Q$ and $R$ or neither of them extend to $q_i$-compatible preimages of $P'$.
Unless $Q|_D$ has another preimage that is not part of a diamond, the pattern $R|_D$ can never be forbidden from $q_i$-compatible preimages by extending $P$.
We deem $P$ to be invalid if it admits a diamond, as this potentially dangerous property can be checked with a bounded number of calls to a SAT-solver, whereas there does not seem to be an efficient way of determining whether each forced pattern admits a preimage that is not part of a diamond.

The final zero-score pattern $P$ typically has an irregular shape, so it is completed into a rectangle one cell at a time, favoring 0-cells unless doing so would result in an invalid pattern.
The constraint of admitting preimages with only 0-cells near the border is not enforced by the program, so it must be done manually.
We added and removed 1-cells near wires in a trial-and-error fashion until the property held.

The program has various additional features that we have not fully described here.
We only mention a forced preference to filling small crevices on the border of $P$ before other cells, the possibility of forcing $P$ to be periodic in some direction, and support for multithreading.

The hill-climber program was used to find all patterns of Section~\ref{sec:BasicGadgets} except the charger.\footnote{In fact, while we did not manage to directly charge a wire with the hill-climber, we did find a ``relaxer'' using it. It has the specs $N \vdash E; EN \in \{00, 10\}$ meaning given a charged wire in a fixed phase, it can relax the phase to one of two choices. In combination with the enforcer this gives a (rather large) charger gadget.}
We note that the patterns were found with various early iterations of the program with different scoring functions.
Together with the random component of the search, this means that the current program may produce vastly different gadgets.

\subsection{The genetic algorithm}

Like the hill-climbing algorithm, the genetic algorithm attempts to find a pattern that forces certain behavior for the preimage. Unlike the hill-climbing algorithm, the genetic algorithm was not written as a general-purpose search program, but rather experimentally with hard-coded inputs and parameters. Since we only found the charger with this approach, we concentrate on the charger-specific optimization problem. As finding patterns with it requires extensive human effort, we have not included its source code.

What we want is to find a pattern $P \in \{0,1\}^{[0,2n-1] \times [0,m]}$ satisfying the following conditions.
The first two rows contain only a wire, which is precisely at the midpoint, i.e.\ $\supp(P|_{[0,2n-1] \times [0,1]}) = [n-1,n] \times [0,1]$, and the remainder of the border of thickness $2$ contains only zeroes.
For at least two of the wire signals that can appear on $[n-2,n+1] \times [0,1]$, they appear with zeroes on the remainder of the border.
Conversely, the number of different restrictions of preimages to $[n-2,n+1] \times [0,1]$ is precisely two, namely we see precisely two different phases of the wire signal.

The conditions on the border of $P$ hold at all times by our choices of initial patterns, and choice of how we modify them. We use a simple scoring function to check how far a pattern is from having the latter properties: If only zero or one wire signal phases occur in the preimages with zeroes on the border, the pattern is immediately discarded (or given score $-\infty$). Then we simply count the number $s$ of restrictions of preimages to $[n-2,n+1] \times [0,2]$, and use the score $s-2$; the maximal possible score is then $0$, and it means that there are only two possibilities, which must be two wire signal phases, since we are guaranteed that two phases are even possible with zeroes elsewhere on the border of the preimage pattern. Both the check and the counting are easy to perform with a SAT solver.

We are then in the standard scheme of searching for a binary pattern under a scoring function, with a particular score indicating a pattern with the desired property. To solve this problem genetically, the algorithm keeps, at all times, track of a finite number $n$ of patterns (we used an upper bound of 100). It performs a given number of iterations on this population. In each iteration, it performs reproduction with genetic modifications to produce a number of new patterns. They are scored, and the best $n$ are kept, with a minor twist: to keep the population from becoming too homogenous, we additionally compare ``genetic similarity'' (by simply counting the number of equal cells!), and do not include a pattern if it is too similar to multiple already included patterns.

Our reproduction rules are the following.
For cloning and mutation, we pick a pattern at random, then pick a small set of random small rectangles contained in its domain, and randomly pick the new contents of these rectangles.
For crossover, we pick two random patterns $P_1, P_2$ in our set, and construct a new pattern $P$ by picking the $i$th row randomly from the corresponding row of $P_1$ or $P_2$, for each $i \in [0,m]$. The choices of which pattern each row is taken from are not independent, but are generated by a Markov process that prefers to pick the row $i+1$ from the same pattern that row $i$ is copied from.

The idea of crossover is that some rows might somehow eliminate some preimages, and other rows might eliminate others. There were indeed situations during search where the algorithm was stuck for a while, after which crossover finally found a better solution. We do not know, however, whether this was truly a result of combining two patterns, or whether it could have been replaced by a more aggressive mutation rule.

\subsection{Verification}

The repository \cite{pat-repo} contains a Python script that verifies the claimed properties of the gadgets of Section~\ref{sec:BasicGadgets} (that they implement the claimed specs), and those gadgets of Section~\ref{sec:CompositeGadgets} that are small enough for the SAT-solver to handle (that they implement the claimed specs and have the correct dimensions and wire offsets).
The script allows the user to choose between three different SAT encodings of the local rule of the Game of Life (a divide-and-conquer encoding with no auxiliary variables, an encoding based on sorting networks, and an encoding based on a merge operation) to reduce the likelihood of errors in the encoding resulting in false positives.

\section{Conclusion}

The contributions of this article have many forms.
First, of course, we have shown the NP-completeness of the existence of preimages of finite patterns for the Game of Life, and proved several hardness results for the preimages of infinite periodic configurations.
Second, we present a unified theoretical framework for proving such computational hardness results about finding preimages under a CA.
Third, we provide an implementation of an algorithm for constructing gadgets that can in principle be adapted to any two-dimensional CA, and with a bit more work, to any CA in any dimension greater than 1.

In Section~\ref{sec:JeandelRao} we produced an explicit $6210 \times 37800$-periodic configuration that admits preimages, but only aperiodic ones.
We have not seriously tried to optimize the periods, apart from seeking to produce a somewhat readable picture in Figure~\ref{fig:JeandelRao}.

\begin{question}
  What are the smallest periods (for example in terms of the size of the fundamental domain) of a periodic configuration that has a $g$-preimage, but no periodic preimage?
\end{question}

In Section~\ref{sec:Universality} we developed a theory of simulations between block maps and proved that the Game of Life is semiweakly universal.
We leave open whether it is strongly universal.
In concrete terms, strong universality corresponds to simulating the circuit system of Section~\ref{sec:CompositeGadgets} using rectangular patterns in such a way that the preimages of a configuration that simulates a circuit correspond bijectively to the valid assignments of signals in the circuit: there must not be any extraneous preimages.
This would imply, for example, that it is undecidable whether a periodic configuration that admits a preimage also admits an aperiodic preimage, which for now we cannot conclude.

\begin{question}
  Is the Game of Life strongly universal as a block map?
\end{question}

While our results are particular to a single step of Life, the method of finding constrained preimages by computer searches accompanied by SAT-solvers is in principle applicable to any cellular automaton in at least two dimensions (and any fixed number of iterations of such a CA).
However, the methods do not readily extend to beyond a fixed number of iterations, and even then the amount of computational resources needed grows rapidly with the radius of the CA.

Of particular interest would be a general method for performing \emph{long-term} backwards computation in the Game of Life. So far, we do not have such tools. Namely, while we show that in a single step backwards in time, a configuration can enforce (even infinitary) computation on every preimage, there is nevertheless a lot of freedom in the choice of the preimage, and thus we have no control on the second-order preimage.

\begin{question}
Can similar results be proved for higher powers of the Game of Life?
\end{question}

\begin{question}
Are there Game of Life configurations such that all of their preimage chains perform universal computation backwards in time?
\end{question}

\bibliographystyle{plain}
\bibliography{golfin1bib}{}

\end{document}